\documentclass[twocolumn,twoside]{IEEEtran}
\usepackage{xcolor,soul,framed} 

\colorlet{shadecolor}{yellow}
\usepackage[pdftex]{graphicx}
\graphicspath{{../pdf/}{../jpeg/}}
\DeclareGraphicsExtensions{.pdf,.jpeg,.png}

\usepackage[cmex10]{amsmath}
\usepackage{booktabs}
\usepackage{multirow}
\usepackage{array}
\usepackage{mdwmath}
\usepackage{mdwtab}
\usepackage{eqparbox}
\usepackage{url}
\usepackage{cite}
\usepackage{amsfonts}
\usepackage{amsmath, amsthm} 
\newtheorem{theorem}{Theorem} 
\newtheorem{corollary}{Corollary}
\newtheorem{remark}{Remark}
\usepackage{mathrsfs}
\usepackage{tabularx}
\usepackage[ruled]{algorithm2e}
\usepackage{hyperref}
\usepackage[normalem]{ulem}
\usepackage{amssymb}
\usepackage{subfigure}

\newcommand{\bs}[1]{\ensuremath{{\boldsymbol{#1}}}}

\IEEEoverridecommandlockouts
\begin{document}
	\title{DRCC-LPVMPC: Robust Data-Driven Control for Autonomous Driving and Obstacle Avoidance~\thanks{The work was supported in part by the National Science Foundation (NSF) under Award CMMI 2146056 (K. Yu). ({\em Corresponding author: Kaiyan Yu}).}}
	\author{Shiming Fang\thanks{S. Fang and K. Yu are with the Department of Mechanical Engineering, Binghamton University, Binghamton, NY 13902 USA (email: {sfang10@binghamton.edu}; {kyu@binghamton.edu}). X. Li is with the Department of Mechanical and Aerospace Engineering, Syracuse University, Syracuse, NY 13244  (e-mail: xli428@syr.edu). C. Wu is with the National Center for Applied Mathematics, Chongqing Normal University, Chongqing, 401331, China (e-mail: changzhiwu@gzhu.edu.cn).}, Xilin Li\thanks{Source codes are available at \href{https://github.com/Binghamton-ACSR-Lab/DRCCLPVMPC}{https://github.com/Binghamton-ACSR-Lab/DRCCLPVMPC}.}, Changzhi Wu, and Kaiyan Yu,~\IEEEmembership{Member,~IEEE}}
\maketitle

\begin{abstract}
	Safety in obstacle avoidance is critical for autonomous driving. While model predictive control (MPC) is widely used, simplified prediction models such as linearized or single-track vehicle models introduce discrepancies between predicted and actual behavior that can compromise safety. This paper proposes a distributionally robust chance-constrained linear parameter-varying MPC (DRCC-LPVMPC) framework that explicitly accounts for such discrepancies. The single-track vehicle dynamics are represented in a quasi-linear parameter-varying (quasi-LPV) form, with model mismatches treated as additive uncertainties of unknown distribution. By constructing chance constraints from finite sampled data and employing a Wasserstein ambiguity set, the proposed method avoids restrictive assumptions on boundedness or Gaussian distributions. The resulting DRCC problem is reformulated as tractable convex constraints and solved in real time using a quadratic programming solver. Recursive feasibility of the approach is formally established. Simulation and real-world experiments demonstrate that DRCC-LPVMPC maintains safer obstacle clearance and more reliable tracking than conventional nonlinear MPC and LPVMPC controllers under significant uncertainties.
\end{abstract}

\begin{IEEEkeywords}
	Distributionally robust chance-constrained problem, model predictive control, linear parameter varying models, vehicle dynamics uncertainty.
\end{IEEEkeywords}

%
\IEEEpeerreviewmaketitle

\section{Introduction}

\IEEEPARstart{S}{afety} is a critical challenge in autonomous driving, particularly in dynamic scenarios such as collision avoidance, which involve complex vehicle models and unpredictable disturbances. Model predictive control (MPC) has become a popular framework because it explicitly handles system and input constraints while benefiting from advances in machine learning and computation~\cite{zhang2020near,hewing2019cautious}. Its use in autonomous driving has been widely studied~\cite{kensbock2023scenario,williams2018information}, motivated by the growing demand for AI-based safe control architectures.  

Dynamic vehicle models capture lateral dynamics better than simple kinematic models, but discrepancies between even detailed four-wheel models and real vehicle behavior remain problematic in high-speed driving. Single-track models (STMs), which approximate four-wheel dynamics as a bicycle model, strike a balance between fidelity and computational cost. Nonlinear MPC (NMPC) is often applied for such models, but its high computational burden and feasibility issues in real-time obstacle avoidance remain challenging~\cite{nezami2024obstacle}.  

In MPC-based obstacle avoidance, obstacles are often simplified into geometric shapes to keep constraints tractable. However, safety requires more than collision avoidance alone: a vehicle should begin adjusting its trajectory smoothly once an obstacle is detected, instead of executing abrupt turns or hard braking when the obstacle is already near. To enable smoother maneuvers, bounding-box methods in the Tangent-Normal-Binormal (TNB) frame have been explored~\cite{alcala2020lpv}.  

To address these challenges, a common approach is to use re-linearization methods to simplify nonlinear dynamics. Accordingly, this paper adopts a quasi-linear parameter-varying (quasi-LPV) approach~\cite{bokor2005linear} to linearize the dynamics and constraints of the nonlinear vehicle model. LPV modeling represents nonlinear systems in a linear time-varying framework by approximating the nonlinear model with adaptive linear representations. In quasi-LPV, the scheduling parameter vector is defined as a function of selected system states and inputs, ensuring system matrices are updated dynamically. By converting NMPC into a linear formulation, LPV-based MPC (LPVMPC) reduces complexity to solving a quadratic program (QP), while still adapting the model matrices online.

LPVMPC has demonstrated strong performance in tasks such as racing~\cite{alcala2020autonomous} and path following~\cite{tian2022gain}, yet challenges persist in predicting scheduling parameters and ensuring robustness. quasi-LPV-MPC methods have also been extended to UAVs: \cite{rodriguez2023qlpv} applied a 12-state model for quadrotor tracking, and \cite{samir2024velocity} incorporated obstacle avoidance and stability constraints for fixed-wing UAVs. These studies highlight the versatility of quasi-LPV modeling but remain limited to simulation, without addressing uncertainty or real-time feasibility.

While iterative quasi-LPV-MPC schemes can converge to nonlinear dynamics with sufficient iterations~\cite{hespe2021convergence}, real-time implementation is impractical under fast sampling or complex tasks. Even with convergence, the bicycle model remains an approximation of the unknown full dynamics. To enable real-time control without iterative refinement, we adopt a pragmatic strategy: scheduling parameters at each step are taken from the previous optimal solution~\cite{nezami2024obstacle}, reducing computation time but introducing model mismatch and sensor noise. These uncertainties can compromise safety by producing trajectories close to obstacle boundaries.


The concern of disturbances further motivates robust MPC (RMPC). For instance, \cite{chen2025tube} proposed Tube-RMPC for road-tire adhesion uncertainty, and \cite{liu2021matrix} designed robust feedback laws via convex constraints. However, these methods often yield conservative solutions due to overestimated uncertainty bounds. To reduce conservatism, distributionally robust optimization (DRO) has emerged as a promising alternative. Recent work has demonstrated DRO’s capability to manage uncertainties in motion planning and control~\cite{10465640}, as well as robotics~\cite{10610404}, though nonlinear and nonconvex formulations still hinder real-time applications.

Building on DRO, recent studies have extended robust optimization to distributionally robust chance-constrained (DRCC) problems in autonomous robotics. For instance,~\cite{10610404} ensured safe navigation in crowded environments via risk metrics and DRCC-based safe-corridor constraints to handle perception uncertainty, while~\cite{hakobyan2020learning} integrated Gaussian Process regression with DRO for robust motion control under unknown obstacle dynamics. In~\cite{chen2021constrained}, chance-constrained Gaussian belief space planning enabled real-time, uncertainty-aware motion planning. More recent efforts have combined DRCC with MPC, such as data-driven DRCC-MPC for adaptive cruise control~\cite{drcc_mpc_cacc_2024}, DRCC-based safe-corridor trajectory optimization~\cite{drcc_traj_opt_2024}, and Wasserstein DRCC-MPC for collision avoidance~\cite{drcc_wasserstein_2024}. However, most retain nonlinear models and nonconvex constraints, leading to high computational cost and limited real-time applicability.

A primary challenge in solving DRCC problems is therefore tractability. Data-driven approaches provide efficient DRCC solutions when uncertainty distributions are unknown, often using Wasserstein ambiguity sets~\cite{chen2022wasserstein}. For example, \cite{liu2025data} integrates Wasserstein-ball ambiguity sets into power-flow chance constraints, \cite{chen2025bayesian} leverages Bayesian intervals for finite-sample guarantees, \cite{comden2021moment} employs moment-based sets for efficiency, and \cite{bahari2025data} applies DRO to multi-agent systems under chance constraints.

While LPVMPC has been widely applied to trajectory tracking, most schemes assume known disturbance models and are not robust to distributional ambiguity~\cite{kumar2025realtime}. Conversely, DRCC provides safety guarantees under distributional uncertainty but has rarely been integrated into LPV-based vehicle control. To the best of our knowledge, this work is the first to combine data-driven DRCC with LPVMPC, enabling robust control under probabilistic safety constraints learned from sample data and filling an important gap in autonomous driving research.

This paper introduces a novel framework, DRCC-LPVMPC, for autonomous vehicle control under uncertainty. The method accounts for discrepancies between simplified linearized models and real vehicle dynamics, as well as additive disturbances from sensors and localization. We adopt a data-driven strategy: discrepancies between measured sensor states and quasi-LPV predictions are used to generate empirical disturbance samples, forming ambiguity sets for the DRCC formulation. These ambiguity sets constrain the Wasserstein distance between empirical and true distributions, yielding an infinite-dimensional DRCC problem. This can be reformulated as finite-dimensional convex constraints resembling conditional-value-at-risk (CVaR) constraints with right-hand uncertainty~\cite{xie2021distributionally}, solvable efficiently with a QP solver. The full framework is outlined in Fig.~\ref{fig:Overview}.  
Numerical and experimental results demonstrate that DRCC-LPVMPC improves robustness and safety in obstacle avoidance compared with NMPC and standard LPVMPC, especially in multi-obstacle scenarios with significant uncertainties.

\begin{figure}[t!]
	\begin{center}
		\includegraphics[width=0.97\linewidth]{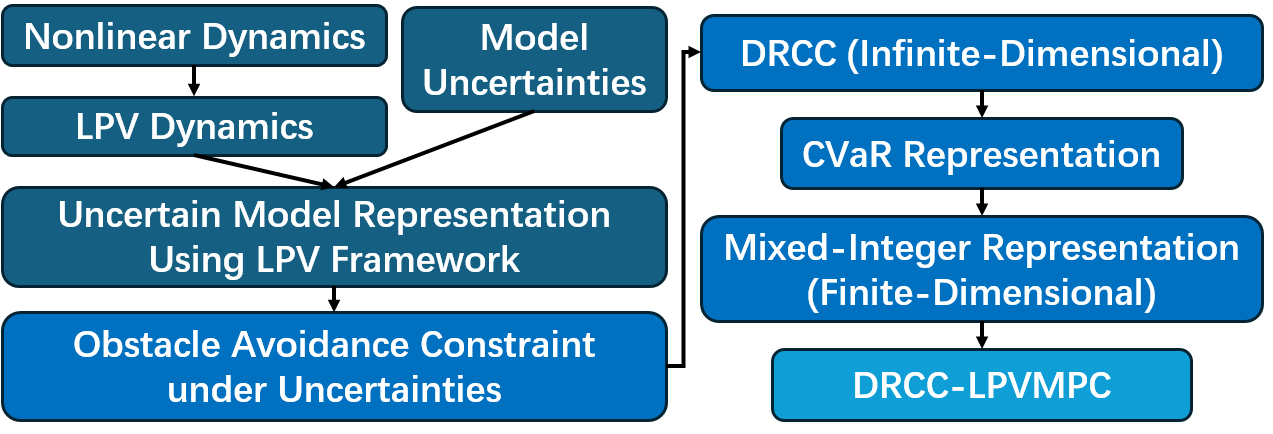}
        \vspace{-3mm}
		\caption{Overview of the derivation and design process of DRCC-LPVMPC, highlighting the transition from the nonlinear vehicle model to the quasi-LPV representation with additive uncertainties, and the formulation of the chance-constrained problem using the right-hand CVaR constraint.}\label{fig:Overview}
	\end{center}
\end{figure}

The main contributions of this paper are 1) the development of DRCC-LPVMPC, a data-driven LPVMPC framework that enhances safety and robustness in autonomous driving by explicitly accounting for model uncertainties and additive disturbances, achieving real-time performance via QP-based optimization. 2) Vehicle dynamics are represented in a quasi-LPV form derived from a simplified STM, with model discrepancy and sensor noise captured as sampled additive uncertainties, enabling efficient real-time control without full nonlinear models. 3) The method’s recursive feasibility is formally established, and simulations demonstrate larger obstacle clearances and more reliable tracking than NMPC and LPVMPC under significant uncertainties, without assuming bounded or Gaussian distributions. 4) Real-world experiments further validate the approach, demonstrating robust obstacle avoidance where LPVMPC cannot avoid obstacles.

This paper is organized as follows: Section~\ref{sec:Dyanmics} presents the quasi-LPV vehicle dynamics model. Section~\ref{sec:ObstacleAvoidance} introduces the linearized obstacle avoidance and LPVMPC formulation. Section~\ref{sec:DRCC-LPVMPC} develops the finite-dimensional DRCC constraints and the DRCC-LPVMPC formulation with recursive feasibility. Section~\ref{sec:Setup} details the nonlinear dynamics and the experimental setup for LPVMPC, DRCC-LPVMPC, and NMPC. Section~\ref{sec:result} compares NMPC, LPVMPC, and DRCC-LPVMPC, highlighting DRCC-LPVMPC’s enhanced safety and feasibility. Section~\ref{sec:RealExp} presents real-world results on a 1:5 scaled vehicle, demonstrating robust obstacle avoidance. Section~\ref{sec:concl} concludes with key findings and future directions.

\section{Nonlinear Dynamics and LPV Modeling}
\label{sec:Dyanmics}

\subsection{Notations} 

The symbol $\bs{\mathcal{I}}_n$ denotes an $n$ by $n$ identity matrix. The symbol \(\mathbb{Z}_{\geq 0}\) represents the set of nonnegative integers and $\mathbb{Z}_{>0}$ represents the positive integer set. 
The symbol \(\mathbb{R}^{i\times j}\) represents a real matrix with $i$th rows and $j$th columns. Given an integer $n$, we let $[n] := \{1,2,\dots,n\}$. The symbol $\bs{e}_k^n \in \mathbb{R}^{1\times n}$ represents an elementary vector where only the $k$th element is equal to one. We let $\tilde{\bs{w}} := (\tilde{w}^1,\dots,\tilde{w}^N)$ denote the set distribution of the unknown disturbance and $\hat{\bs{w}}_j := (\hat{w}_j^1,\dots,\hat{w}_j^N),j\in[J]$ denote the $j$th set of sampled data. We further denote $\tilde{\bs{e}}^k_w := [\bs{0}_{1\times6},\tilde{\bs{w}}^k]^\top,k\in[N]$ and $\hat{\bs{e}}_{wj}^k := [\bs{0}_{1\times6},\hat{\bs{w}}_j^k]^\top,j\in[J],k\in[N]$ as the elementary disturbance vectors. $\|\cdot\|$ represents the norm operator. 

We denote $R_r$ as the reference trajectory and define a TNB frame based on it. In this frame, the coordinate along the tangential direction of $R_r$ is denoted as $T$, while the coordinate perpendicular to $T$ is represented as $\Lambda$. $S$ is defined as the arc-length traveled along $R_r$ from the starting point to the current position $[T,\Lambda]$. The tangential angle at a given point along $R_r$ is represented by $\theta$. Transformation functions $\Gamma_\text{r}$, $\Gamma_\text{in}^{T}$, $\Gamma_\text{in}^{S}$, and $\Gamma_\text{in}^{\theta}$ map points in the TNB frame to Cartesian coordinates, $T$, $S$, and $\theta$, respectively. Detailed discussions of these notations and their roles appear in Sections~\ref{section:linearizeObsAvoidCons} and~\ref{section:LPVMPC}, with additional symbols explained as introduced.

\subsection{Nonlinear Vehicle Model}

In this paper, we adopt STM within the quasi-LPV framework, as illustrated in Fig.~2, and account for model discrepancies through an unknown-distributed uncertainty term. This approach enables computationally efficient predictive control while addressing uncertainties in both the linearized dynamics and additive disturbances, ensuring robust obstacle avoidance in autonomous driving. 

\begin{figure}[bt!]
	\begin{center}
\includegraphics[width=\linewidth]{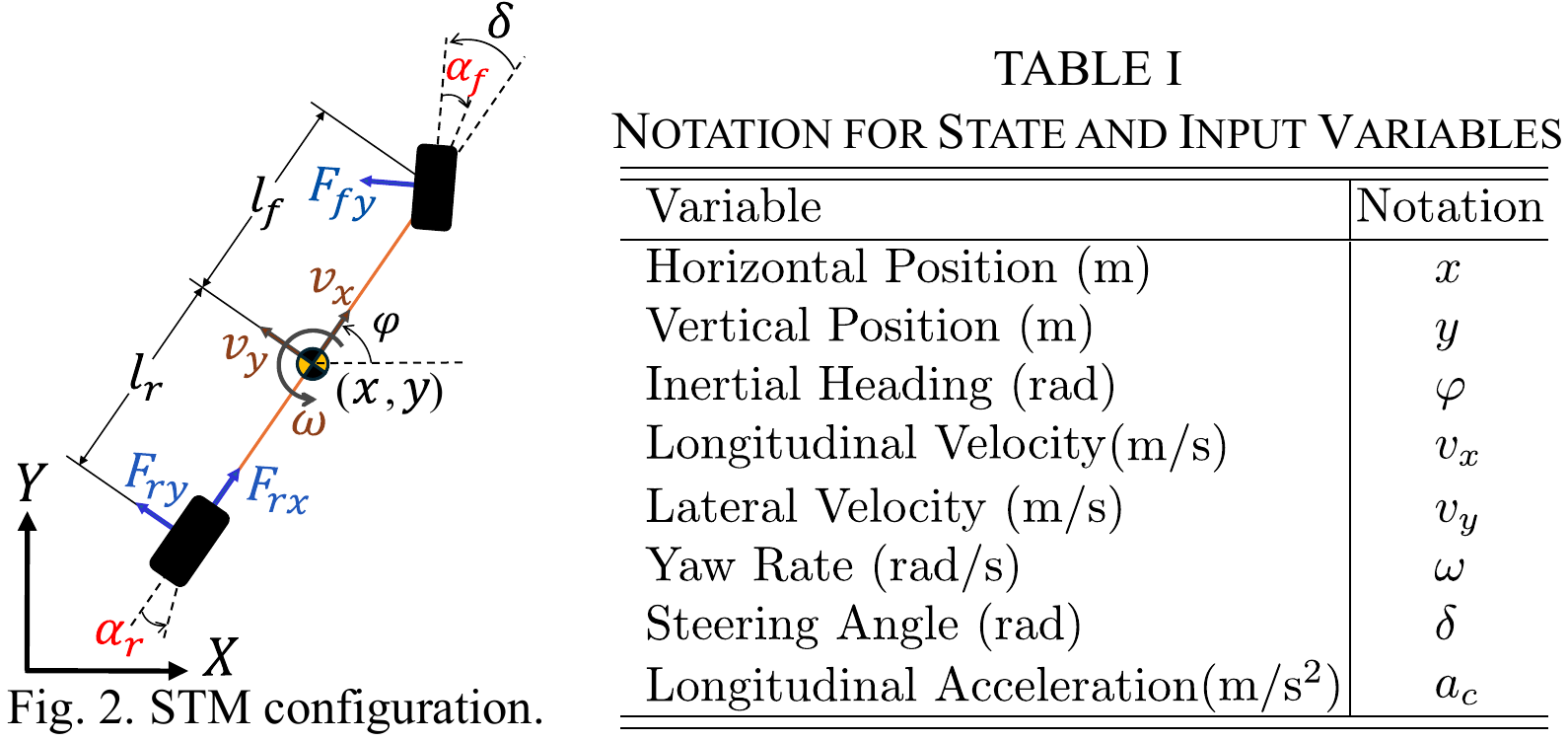}
	\end{center}
\end{figure}

Following~\cite{rajamani2011vehicle}, the vehicle state vector $\bs{z}$ at time step $k$ is defined as \(\bs{z}_{k} = [x_{k},y_{k},\varphi_{k},v_{x_{k}},v_{y_{k}},\omega_{k}]\), representing the position, orientation, linear velocities, and yaw rate of the vehicle's center of gravity. The control input vector $\bs{u}$ consists of the steering angle and acceleration, given by \(\bs{u}_k = [\delta_{k},a_{c_{k}}]\), as summarized in Table I. 
The slip angle at the front tire \(\alpha_f\) and the rear tire \(\alpha_r\) are calculated as:
\begin{equation}\label{eq:alpha}
        \alpha_f = \delta - \frac{l_f\omega + v_y}{v_x},\quad \alpha_r = \frac{l_r \omega - v_y}{v_x}.
\end{equation}
The front and rear lateral forces \(F_{fy}\) and \(F_{ry}\) are calculated based on the cornering stiffness \(C_{\alpha f}\) and \(C_{\alpha r}\): \(F_{fy} = C_{\alpha f} \alpha_f\) and \(F_{ry} = C_{\alpha r} \alpha_r\). The longitudinal force acting on the rear wheel, \(F_{rx}\), is given by: \(F_{rx} = ma_c\), where $m$ is the vehicle mass. Thus, the continuous differential equations governing the vehicle dynamics, represented as \(\dot{\bs{z}}_k=f(\bs{z}_k,\bs{u}_k)\), are formulated as follows:   
\begin{equation}
	\begin{aligned}\label{eq:differential_equation}
        \dot{x}_k &= v_{x_k} \cos \varphi_k - v_{y_k} \sin \varphi_k, \\
        \dot{y}_k &= v_{x_k} \sin \varphi_k + v_{y_k} \cos \varphi_k, \\
        \dot{\varphi}_k &= \omega_k, \\
        \dot{v}_{x_{k}} &= \frac{1}{m} (F_{rx_k} - F_{fy_k} \sin \delta_k + m v_{y_k} \omega_k), \\
        \dot{v}_{y_{k}} &= \frac{1}{m} (F_{ry_k} + F_{fy_k} \cos \delta_k - m v_{x_k} \omega_k), \\
        \dot{\omega}_{k} &= \frac{1}{I_z} (F_{fy_k} l_f \cos \delta_k - F_{ry_k} l_r),
    \end{aligned}
\end{equation}
where $l_f$ and $l_r$ denote the distances from the center of gravity to the front and rear axles, respectively, and $I_z$ is the vehicle’s yaw moment of inertia.

Before presenting the LPV reformulation, we note that the system is considered in a discrete-time setting, with sampling time $\Delta t$, such that the nonlinear dynamics evolve as $\bs{z}_{{k+1}} = \bs{z}_{k} + f(\bs{z}_{k},\bs{u}_{k})\Delta t, k\in \mathbb{Z}_{\geq 0}$.

\subsection{Quasi-LPV Model}
To formulate the quasi-LPV model used in LPVMPC based on Eq.~\eqref{eq:differential_equation}, we first define the scheduling parameter vector as \(\bs{p}_k = [v_{x_k}, v_{y_k}, \delta_k,\varphi_k]\). Using this scheduling parameter, the quasi-LPV representation of Eq.~\eqref{eq:differential_equation} is expressed as: 
\begin{equation}\label{eq:LPV_state_differential}
    \dot{\bs{z}}_{k} = \bs{\hat{A}}(\bs{p}_{k})\bs{z}_{k} + \bs{\bs{\hat{B}}}(\bs{p}_{k})\bs{u}_{k}, \quad k \in \mathbb{Z}_{\geq 0},
\end{equation}
where \(\bs{\hat{A}} \in \mathbb{R}^{6 \times 6}\) and \(\bs{\hat{B}} \in \mathbb{R}^{6 \times 2}\). The state transition matrix \(\bs{\hat{A}}\) is given by: 
\[
\bs{\hat{A}}(\bs{p}_k) :=
\begin{bmatrix}
0 & 0 & 0 & \cos{\varphi_{k}} & -\sin{\varphi_{k}} & 0 \\
0 & 0 & 0 & \sin{\varphi_{k}} & \cos{\varphi_{k}} & 0 \\
0 & 0 & 0 & 0 & 0 & 1 \\
0 & 0 & 0 & 0 & \hat{a}_{45_{k}} & \hat{a}_{46_{k}} \\
0 & 0 & 0 & 0 & \hat{a}_{55_{k}} & \hat{a}_{56_{k}} \\
0 & 0 & 0 & 0 & \hat{a}_{65_{k}} & \hat{a}_{66_{k}}
\end{bmatrix}_{6\times6}
\]
To simplify the representations, we denote:
\begin{equation*}
    \beta_f := \frac{C_{\alpha f}}{m},\quad \beta_r := \frac{C_{\alpha r}}{m},\quad \gamma_f := \frac{C_{\alpha f}l_f}{I_z},\quad \gamma_r := \frac{C_{\alpha r}l_r}{I_z},
\end{equation*}
and the elements in $\bs{\hat{A}}$ are given as:
\begin{equation*}
	\begin{aligned}
        \hat{a}_{45_{k}} &:= \beta_f\frac{\sin{\delta_{k}}}{v_{x_{k}}}, \quad \hat{a}_{46_{k}} := \beta_f\frac{l_f\sin{\delta_{k}}}{v_{x_{k}}} + v_{y_{k}},\\
        \hat{a}_{55_{k}} &:= -\beta_r\frac{1}{v_{x_{k}}} - \beta_f\frac{\cos{\delta_{k}}}{v_{x_{k}}},\\
        \hat{a}_{56_{k}} &:= \beta_r\frac{l_r}{v_{x_{k}}} - \beta_f\frac{l_f\cos{\delta_k}}{v_{x_{k}}} - v_{x_{k}}, \\
        \hat{a}_{65_{k}} &:= \gamma_r\frac{1}{v_{x_{k}}} - \gamma_f\frac{\cos{\delta_k}}{v_{x_{k}}},\quad \hat{a}_{66_{k}} := -\gamma_f\frac{l_f\cos{\delta_{k}}}{v_{x_{k}}} - \gamma_r\frac{l_r}{v_{x_{k}}}.
    \end{aligned}
\end{equation*}
The input matrix $\bs{\hat{B}}$ is:
\[
\bs{\hat{B}}(\bs{p}_k) :=
\begin{bmatrix}
0 & 0 & 0 & \hat{b}_{41_k} & \hat{b}_{51_k} & \hat{b}_{61_k} \\
0 & 0 & 0 & 1 & 0 & 0 \\
\end{bmatrix}^\top_{6\times2}
\]
and the elements in $\bs{\hat{B}}$ are $
    \hat{b}_{41_k} := -\beta_f\sin{\delta_k}$, $\hat{b}_{51_k} := \beta_f\cos{\delta_k}$, and $\hat{b}_{61_k} := \gamma_f\cos{\delta_k}$.
Based on~\cite{bokor2005linear}, this formulation ensures that the system matrices $\bs{\hat{A}}$ and $\bs{\hat{B}}$ depend only on the scheduling parameter vector $\bs{p}_k$.  
The next state $\bs{z}_{k+1}$ can be obtained using the Euler method as follows:
\begin{equation}\label{eq:lpv_rewrite}
	\begin{aligned}
        \bs{z}_{k+1} &= \bs{z}_k + \dot{\bs{z}}_k\Delta t\\
        &= \bs{z}_k + (\bs{\hat{A}}(\bs{p}_k)\bs{z}_k + \bs{\hat{B}}(\bs{p}_k)\bs{u}_k)\Delta t \\
        &= (\bs{\mathcal{I}}_6 + \bs{\hat{A}}(\bs{p}_k)\Delta t)\bs{z}_k + \bs{\hat{B}}(\bs{p}_k)\Delta t\bs{u}_k \\
        &= \bs{A}(\bs{p}_k)\bs{z}_k + \bs{B}(\bs{p}_k)\bs{u}_k,
   	\end{aligned}
\end{equation}
where $\bs{A}(\bs{p}_k) = \bs{\mathcal{I}}_6 + \bs{\hat{A}}(\bs{p}_k)\Delta t$ and $\bs{B}(\bs{p}_k) = \bs{\hat{B}}(\bs{p}_k)\Delta t$ are the discrete-time quasi-LPV system matrices. For simplicity, we denote these matrices as $\bs{A}_k$ and $\bs{B}_k$ for the remainder of this work.

\subsection{Quasi-LPV Model with Uncertainties}
\label{section:nonlinearModelsamples}
Based on~\cite{pacejka2005tire}, the formulations in Eq.~\eqref{eq:alpha} and the computations for $F_{fy}$ and $F_{ry}$ are intentionally simplified to facilitate subsequent linearization steps. While these simplifications yield a reasonable approximation, they inevitably introduce some inaccuracies into the model.
Even a complete four-wheel vehicle model cannot fully capture the complexities of real-world vehicle behavior, and employing an overly detailed model dramatically increases computational costs, thus rendering real-time control impractical. Although the quasi-LPV method allows for efficient linearization and faster solutions via quadratic programming (outperforming NMPC in computational speed), it remains insufficient for addressing the combined challenges posed by complex vehicle dynamics and measurement noise in physical experiments. Consequently, we adopt a STM that strikes a balanced trade-off between accuracy and computational efficiency, making it more suitable for real-time autonomous driving applications.

Additionally, the scheduling parameter vector required for the quasi-LPV formulation is not directly accessible because it contains the steering input $\delta_k$, which is the decision variable to be computed by the MPC optimization. Therefore, the scheduling parameter cannot be evaluated before solving the optimization problem. To address this issue, we adopt a common quasi-LPV MPC implementation strategy and approximate the scheduling parameter using the previously optimized solution, i.e., $\bs{p}_{i|k} \approx \bs{p}^\star_{i-1|k+1}$
where $\bs{p}_{i|k}$ is the predicted scheduling parameter vector for the $i$th iteration at time step $k$, and $\bs{p}^\star_{i-1|k+1}$ is the optimal scheduling parameter vector from the $(i-1)$th iteration at time step $k+1$. This approximation has demonstrated promising results in scenarios where estimation errors are minimal~\cite{alcala2020lpv}.

However, in obstacle avoidance scenarios, these estimation errors can become critical, as inaccuracies are amplified. When combined with additive measurement errors, such uncertainties can cause the actual vehicle path, determined by the predicted control inputs, to deviate, potentially intersecting with obstacles and leading to collisions.

As discussed previously, both the quasi-LPV estimated STM and more complex four-wheel models introduce discrepancies relative to the intractable real vehicle model and observed measurements. These discrepancies ultimately manifest as disturbances in the vehicle state. We denote this state disturbance at time step $k$ as $\bs{\xi}_k = [\xi_{x_k},\xi_{y_k},\xi_{\psi_k},\xi_{v_{x_k}},\xi_{v_{y_k}},\xi_{\omega_k}]^\top \in \mathbb{R}^{6\times 1}$. After obtaining the quasi-LPV model, the real-world state at time step $k+1$, $\bs{z}_{k+1}$, can be represented by modifying Eq. \eqref{eq:lpv_rewrite} to include the disturbance $\bs{\xi}_k$:
\begin{equation}\label{eq:single_uncertain_eq}
    \bs{z}_{k+1} = \bs{A}_k\bs{z}_k + \bs{B}_k\bs{u}_k + \bs{\xi}_k,\quad k\in\mathbb{Z}_{\geq0}.
\end{equation}
Now we reformulate Eq.~\eqref{eq:single_uncertain_eq} as follows:
\begin{equation*}
	\begin{aligned}
        \bs{z}_{k+1} &= \bs{A}_k\bs{z}_k + \bs{B}_k\bs{u}_k + \bs{\xi}_k \\
        &= \bs{A}_k(\bs{A}_{k-1}\bs{z}_{k-1} + \bs{B}_{k-1}\bs{u}_{k-1} + \bs{\xi}_{k-1}) + \bs{\xi}_k \\
        &= \bs{A}_k(\bs{A}_{k-1}(\bs{A}_{k-2}(\cdots(\bs{A}_0\bs{z}_0 + \bs{B}_0\bs{u}_0 + \bs{\xi}_0)+\bs{B}_1\bs{u}_1+ \\ 
        &\bs{\xi}_1)+\cdots)+\bs{B}_{k-1}\bs{u}_{k-1} + \bs{\xi}_{k-1}) + \bs{B}_k\bs{u}_k + \bs{\xi}_k \\
        &=\bs{E}_0^k\bs{z}_0 + \bs{E}_1^k\bs{B}_0\bs{u}_0 + \cdots + \bs{E}_k^k\bs{B}_{k-1}\bs{u}_{k-1} + \bs{B}_k\bs{u}_k +\\
        &\bs{E}_1^k\bs{\xi}_0 + \cdots + \bs{E}_k^k\bs{\xi}_{k-1} + \bs{\xi}_k,
    \end{aligned}
\end{equation*}
where \(\bs{E}_j^k = \prod_{i=j}^{k}\bs{A}_i\) represents the product operation, i.e., \(\prod_{i=j}^k \bs{A}_i = \bs{A}_k\bs{A}_{k-1}\dots \bs{A}_j\) and \(\bs{E}_k^k = \bs{A}_k\). Then we can write Eq.~\eqref{eq:single_uncertain_eq} in matrix form as follows:
\begin{equation*}
	\begin{aligned}
        \bs{z}_{k+1} &= 
        \begin{bmatrix}
            \bs{E}_0^k, \bs{E}_1^k\bs{B}_0,\cdots,\bs{E}_k^k\bs{B}_{k-1},\bs{B}_k
        \end{bmatrix}
        \begin{bmatrix}
            \bs{z}_0,\bs{u}_0,\cdots,\bs{u}_k
        \end{bmatrix}^\top\\
        &+
        \begin{bmatrix}
            \bs{E}_1^k,\bs{E}_2^k,\cdots,\bs{E}_k^k,\bs{\mathcal{I}}_{6}
        \end{bmatrix}
        \begin{bmatrix}
            \bs{\xi}_0^\top,\bs{\xi}_1^\top,\cdots,\bs{\xi}_k^\top
        \end{bmatrix}^\top.
    \end{aligned}
\end{equation*}

Considering that each iteration includes $N+1$ steps, including the initial state, i.e., $k\in \{0,1,\dots,N\}$, we denote the overall predicted states in the horizon as \(\bs{Y} = [\bs{z}_1^\top,\bs{z}_2^\top,\dots,\bs{z}_N^\top]^\top\in \mathbb{R}^{6N\times1}\) and rewrite the overall decision variables as a combination of the initial state and controls:  $\bs{X} = [\bs{z}_0^\top,\bs{u}_0^\top,\dots,\bs{u}_{N-1}^\top]^\top\in \mathbb{R}^{(6+2N)\times1}$. We can calculate $\bs{Y}$ as follows:
\begin{equation}\label{eq:LPV_lz_form}
    \bs{Y} = \bs{L}_N\bs{X} + \bs{H}_N \bs{\eta}_N,
\end{equation}
where
\begin{equation}\label{eq:Ln}
\bs{L}_N = \scalebox{0.8}{$
\begin{bmatrix}
    \bs{E}_0^0&\bs{B}_0&\bs{0}_{6\times2}&\cdots&\cdots&\bs{0}_{6\times2} \\
    \bs{E}_0^1&\bs{E}_1^1\bs{B}_0&\bs{B}_1&\bs{0}_{6\times2}&\cdots& \bs{0}_{6\times2} \\
    \vdots&\vdots&\vdots&\vdots&\ddots&\vdots \\
    \bs{E}_0^{N-1}&\bs{E}_1^{N-1}\bs{B}_0&\bs{E}_2^{N-1}\bs{B}_1&\cdots&\bs{E}_{N-1}^{N-1}\bs{B}_{N-2}&\bs{B}_{N-1}
\end{bmatrix}
$}
\end{equation}
and 
\begin{equation}\label{eq:Hn}
\bs{H}_N = \scalebox{0.8}{$\begin{bmatrix}
    \bs{\mathcal{I}}_6&\bs{0}_{6\times6}&\cdots&\cdots&\bs{0}_{6\times6} \\
    \bs{E}_1^1&\bs{\mathcal{I}}_6&\bs{0}_{6\times6}&\cdots&\bs{0}_{6\times6} \\
    \bs{E}_1^2&\bs{E}_2^2&\bs{\mathcal{I}}_6&\cdots&\bs{0}_{6\times6} \\
    \vdots&\vdots&\vdots&\ddots&\vdots \\
    \bs{E}_1^{N-1}&\bs{E}_2^{N-1}&\cdots&\bs{E}_{N-1}^{N-1}&\bs{\mathcal{I}}_6
\end{bmatrix}$}
\end{equation}
$\bs{L}_N$, $\bs{H}_N$, and $\bs{\eta}_N$ are the equivalent quasi-LPV system matrices, where $\bs{L}_N \in \mathbb{R}^{6N\times(6+2N)}$ and $\bs{H}_N \in \mathbb{R}^{6N\times6N}$. The unknown disturbance vector is defined as $\bs{\eta}_N = [\bs{\xi}_0^\top,\bs{\xi}1^\top,\dots,\bs{\xi}_{N-1}^\top]^\top \in \mathbb{R}^{6N\times1}$. Eq.~\eqref{eq:LPV_lz_form} shows that disturbances propagate and accumulate over prediction steps, underscoring the need for robust control. A sufficiently small sampling time $\Delta t$ ensures accurate quasi-LPV discretization, numerical stability, and avoids overly conservative disturbance amplification in $\bs{H}_N$.

We further represent a specific time step state by introducing a matrix $\bs{D}_k$, where $\bs{D}_k \in \mathbb{R}^{6\times6N}$:
\[
\bs{D}_k = [\underbrace{\bs{0}_{6 \times 6}, \ldots, \bs{0}_{6 \times 6}}_{k-1}, \bs{\mathcal{I}}_6, \underbrace{\bs{0}_{6 \times 6}, \ldots, \bs{0}_{6 \times 6}}_{N-k}],
\]
and then it gives:
\begin{equation}
	\begin{aligned}
    \bs{z}_{k} &= \bs{D}_k(\bs{L}_N\bs{X} + \bs{H}_N\bs{\eta}_N) \\
    &=\bs{D}_k\bs{L}_N\bs{X}+\bs{D}_k\bs{H}_N\bs{\eta}_N,\quad k\in\{1,2,\dots,N\}.\label{eq:single_state_lz}
   	\end{aligned}
\end{equation}


It is worth noting that the pure quasi-LPV model, i.e., $\bs{z}_k=\bs{D}_k\bs{L}_N\bs{X}$, cannot perfectly represent the real vehicle dynamics due to several sources of modeling discrepancy. As discussed earlier, three main sources of uncertainty are considered in this work: (1) The nonlinear STM dynamics are governed by Eq.~\eqref{eq:differential_equation}, where the slip angles are computed as
\begin{equation}
        \alpha_f = \delta - \arctan\left(\frac{\omega l_f + v_y}{v_x}\right),\quad
        \alpha_r = \arctan\left(\frac{\omega l_r - v_y}{v_x}\right).
    \label{eq:slipAngleFormulate1}
\end{equation}
The corresponding lateral tire forces follow the nonlinear Pacejka model
\begin{equation}\label{eq:nonlinearForceFormulate1}
    \begin{aligned}
        F_{fy} &= D_f \sin\left(C_f \arctan(B_f \alpha_f)\right),\\
        F_{ry} &= D_r \sin\left(C_r \arctan(B_r \alpha_r)\right),
    \end{aligned}
\end{equation}
where $B_i$, $C_i$, and $D_i$ ($i=f,r$) are the Pacejka tire coefficients. Eq.~\eqref{eq:slipAngleFormulate1} and Eq.~\eqref{eq:nonlinearForceFormulate1} provide a more accurate description of tire dynamics, while the quasi-LPV model instead uses the linear cornering-stiffness approximation in Eq.~\eqref{eq:alpha}. This linearization introduces the first source of discrepancy between the quasi-LPV model and the nonlinear STM dynamics, and further differences relative to the real vehicle (e.g., a four-wheel model).
(2) The quasi-LPV model relies on a scheduling parameter vector at the current time step to compute the system matrices. Since this parameter includes the control variable $\delta_k$, which is only available after solving the MPC optimization, it must be approximated using the previously solved decision variable. This introduces additional approximation error. 
(3) Practical implementations involve additive disturbances arising from localization inaccuracies, sensor noise, and other measurement errors. 
These discrepancies accumulate over time and introduce uncertainties in the predicted vehicle behavior. In this work, they are captured by the additive disturbance term $\xi_k$ in Eq.~\eqref{eq:single_state_lz}, which compensates for the modeling gap between the quasi-LPV model and the actual system. Because the true distribution of $\xi_k$ is generally unknown, we construct a data-driven ambiguity set using collected disturbance samples and incorporate it into the proposed DRCC controller, which is a key contribution of this paper.

In the next section, we build on Eq.~\eqref{eq:single_state_lz} to represent the obstacle avoidance constraints in a linearized form.

\section{Obstacle Avoidance via Linear Form of Constraints}
\label{sec:ObstacleAvoidance}

For the tracking problem, we enforce obstacle avoidance at the local control stage rather than during global path planning. This design choice mirrors realistic operational scenarios where many obstacles are not accessible to a global planner and are only detected online via local sensors. Therefore, to ensure that the controller maintains safety and addresses these uncertainties, we assume in both numerical and real-world experiments that obstacle positions are initially unknown to the global planner and their exact locations are provided solely by external sensors during execution.

Given this necessity for real-time, local avoidance, we must use an efficient method. For NMPC-based obstacle avoidance problems, one common approach assumes that other vehicles are represented by ellipses~\cite{muraleedharan2021real}.  This assumption allows defining the obstacle avoidance constraints as:
$
{(x_\text{obs}-x_{k})^2}/{r_x^2} + {(y_\text{obs}-y_{k})^2}/{r_y^2} \geq 1,
$
where the obstacle is represented as an ellipse with center coordinates $[x_\text{obs},y_\text{obs}]$ and axes $r_x$ and $r_y$. This kind of approach to constructing obstacle constraints involves representing obstacles with predefined geometric shapes and constraining the geometric relationship between vehicle position $[x_k,y_k]$ and the refined objects. While this method provides a reasonable description of obstacle avoidance, it introduces nonlinear constraints that are incompatible with QP. Furthermore, these nonlinear constraints increase the optimization problem's complexity and computational costs. In more complex scenarios, an excessive number of such constraints can make the NMPC problem infeasible.

Several studies~\cite{nezami2024obstacle,alcala2020lpv} have addressed this issue by linearizing obstacle constraints into a generalized linear inequality form (e.g., half-space constraints). This paper adopts a similar approach and leverages the advantages of frame conversion between the TNB frame along the tracking reference~\cite{erkan2018serret} and the Cartesian frame during the process of determining the local reference path.

In this section, we focus on deriving the linear form of the obstacle avoidance constraint and formulate the LPVMPC in a general form that will be used in the subsequent sections.

\subsection{Linear Form of Obstacle Avoidance Constraints}\label{section:linearizeObsAvoidCons}
We first analyze the obstacles in the TNB frame along the tracking reference. Based on theories such as GoalNet~\cite{zhang2021map} and DAC~\cite{narayanan2021divide}, we model the reference raceline $R_r$ in the path-relative TNB frame and denote the conversion between two frames as function $[x,y] = \Gamma_\text{r}(T,\Lambda)$, where $T$ is the tangential axis along the track, and $\Lambda$ is the perpendicular axis to $T$ in the TNB frame, it is obvious that $\Lambda = 0$ if the point is on $R_r$. Fig.~\ref{fig:FtC} illustrates the definition of obstacle constraints. For each obstacle, a bounding box is defined with its center aligned to the direction of the projected point on $R_\text{r}$. To minimize errors in estimating the scheduling parameter vector, two triangular regions are introduced at the front and rear of the bounding box, forming a trapezoidal safety region with four vertices, denoted as $[T_j,\Lambda_j]$, where $j\in \{0,A,B,1\}$ as illustrated in Fig.~\ref{fig:FtC}. The points $A$ and $B$ correspond to two of the bounding box vertices. The Cartesian coordinates of the safety region's vertices are computed as $[x_j,y_j] = \Gamma_\text{r}(T_j,\Lambda_j)$, $j\in \{0,A,B,1\}$. Given a reference point $T^\text{ref}_k$ within the safety region, we define its coordinates as $[T_k^\text{ref},\Lambda_k^\text{ref}]$ instead of using the $[T_k^\text{ref},0]$, ensuring it positions along the hypotenuse of the triangular portion of the safety region. The corresponding $\Lambda^\text{ref}_k$ is directly obtained from the vertex coordinates based on the following triangular geometry:
\begin{equation}\label{eq:TNBgetNk}
\Lambda^\text{ref}_k =
    \begin{cases}
        \frac{T_k-T_0}{T_a - T_0}\Lambda_a, & T^\text{ref}_k\in [T_0,T_A],\\
        \Lambda_a + \frac{T_k-T_a}{T_b-T_a} (\Lambda_b-\Lambda_a), & T^\text{ref}_k\in (T_A,T_B),\\
        \frac{T_1-T_k}{T_1-T_b}\Lambda_b,& T^\text{ref}_k\in [T_B,T_1].
    \end{cases}
\end{equation}
The Cartesian coordinates can then be obtained as follows:
\begin{equation}\label{eq:get_safe_reference_points}
    [x^\text{ref}_k,y^\text{ref}_k] = \Gamma_\text{r}(T^\text{ref}_k,\Lambda^\text{ref}_k).
\end{equation}

\begin{figure}[t!]
  \begin{center}
  \includegraphics[width=3.5in]{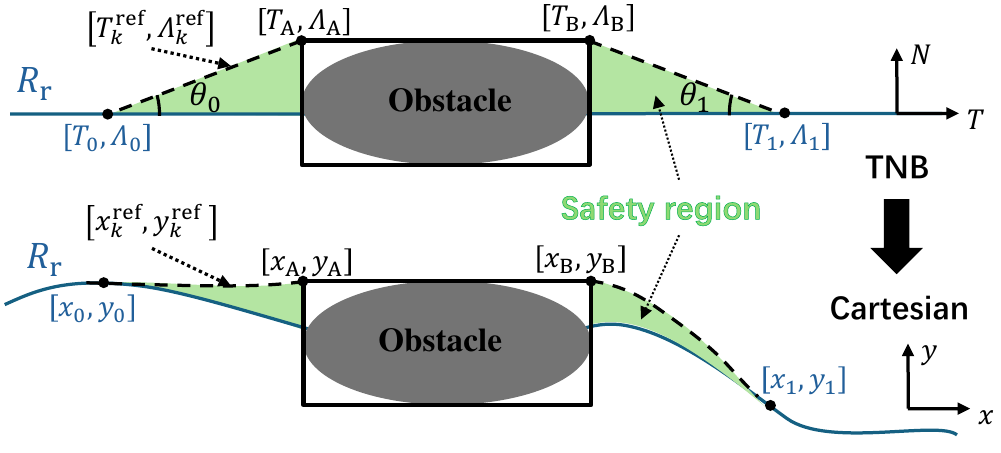}
 \vspace{-7mm}
  \caption{Safety region defined in the TNB frame and converted back to the Cartesian frame.}\label{fig:FtC}
  \end{center}
\end{figure}

\begin{figure}[t!]
  \begin{center}
  \includegraphics[width=3.5in]{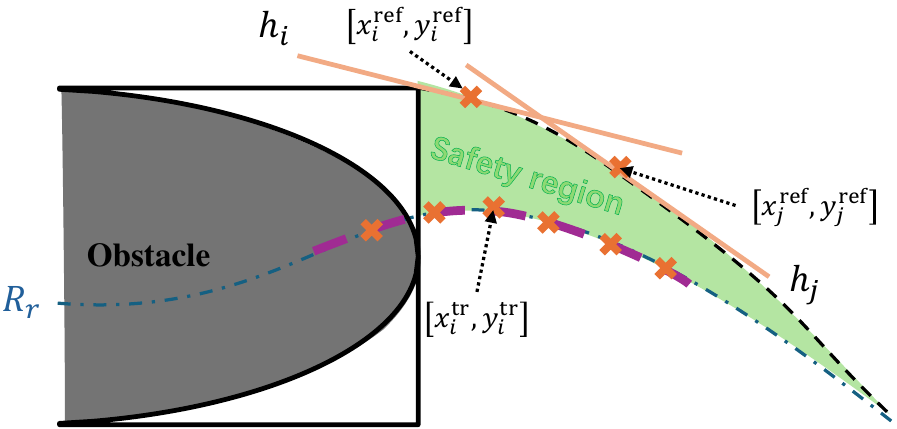}\\
   \vspace{-1mm}
  \caption{Tangent line of the obstacle avoidance constraints at time steps $i$ and $j$, where $h_i$ and $h_j$ represent the safe region boundary tangent lines at $[x^\text{ref}_i,y^\text{ref}_i]$ and $[x^\text{ref}_j,y^\text{ref}_j]$, respectively.}\label{fig:feasible_region}
  \end{center}
\end{figure}

The tangential angle $\theta_k^\text{ref}$ at the point $[x^\text{ref}_k,y^\text{ref}_k]$ can be calculated by:
\begin{equation}\label{eq:theta_reference}
\theta_k^\text{ref} =
    \begin{cases}
        \theta_k^\text{tr} + \theta_0, & T^\text{ref}_k\in [T_0,T_A], \\
        \theta_k^\text{tr} - \theta_1, & T^\text{ref}_k\in [T_B,T_1],
    \end{cases}
\end{equation}
where $\theta_0$ and $\theta_1$ are the vertex angles shown in Fig.~\ref{fig:FtC}. Thus, for a given reference point $[x^\text{ref}_k,y^\text{ref}_k]$ from Eq.~\eqref{eq:get_safe_reference_points} and the tangential angle from Eq.~\eqref{eq:theta_reference}, the equation of the corresponding tangent line $h_k$ can be obtained, as shown in Fig.~\ref{fig:feasible_region}. Here, $[x_i^\text{tr},y_i^\text{tr}]=\Gamma_\text{r}(T_i^\text{ref},0)$ denotes the original reference point where $\Lambda$ is zero, while $[x_k^\text{ref},y_k^\text{ref}]$ represents the shifted reference point with the same $T^\text{ref}$ but an updated $\Lambda^\text{ref}$, obtained using Eq.~\eqref{eq:TNBgetNk}. Ideally, the feasible region for $[x_k,y_k]$ should lie outside the safety region. Depending on the required vehicle maneuver to avoid obstacles, the constraint is formulated as:
\begin{equation}\label{eq:obs_linear_constraint}
h_k = 
    \begin{cases}
        a_k x_k + b_k y_k - c_k \geq 0,&\text{Left Turn}, \\
        a_k x_k + b_k y_k - c_k \leq 0,&\text{Right Turn},
    \end{cases}
\end{equation}
where $a_k$, $b_k$, and $c_k$ are the coefficients of the tangent line. The decision on which side the vehicle should take to bypass obstacles can be determined by topology-based planners, such as~\cite{de2023globally}, and is beyond the scope of this work.

\subsection{LPVMPC}
In the case of pure LPVMPC, where no disturbance is considered, the state at the $k$th time step, based on Eq.~\eqref{eq:single_state_lz}, is given by:
\begin{equation}\label{eq:lpv_pure_state}
    \bs{z}_k = \bs{D}_k\bs{L}_N\bs{X},\quad k\in\{1,2,\dots,N\}.
\end{equation}
By introducing the vector $\bs{d}_k = [a_k,b_k,0,0,0,0]$, we can reformulate Eq.~\eqref{eq:obs_linear_constraint} using Eq.~\eqref{eq:lpv_pure_state}, leading to:
\begin{equation}\label{eq:obs_linear_constraint_standard_form}
h_k = 
    \begin{cases}
        \bs{d}_k\bs{D}_k\bs{L}_N\bs{X} - c_k \geq 0,&\text{Left Turn}, \\
        -\bs{d}_k\bs{D}_k\bs{L}_N\bs{X} - (-c_k) \geq 0,&\text{Right Turn}.
    \end{cases}
\end{equation}

Besides obstacle avoidance, all linear constraints can be expressed in the form of Eq.~\eqref{eq:obs_linear_constraint_standard_form} by appropriately selecting $\bs{d}_k$ and $c_k$. For instance, the maximum velocity constraint can be represented by setting $\bs{d}_k = [0,0,0,1,0,0]$ and $c_k = v_{x_\text{max}}$, etc. Assuming there are $Q$ types of constraints in total, the LPVMPC formulation can be written as:
\begin{align}\label{eq:LPVMPC_formulation}
    \min_{\bs{X}}& \sum_{k=1}^N \{P_k[(\bs{e}_0^6\bs{D}_k\bs{L}_N\bs{X}-x^\text{ref}_k)^2 + (\bs{e}_1^6\bs{D}_k\bs{L}_N\bs{X} - y^\text{ref}_k)^2]\}\nonumber\\ 
    \text{s.t.}& \quad \bs{d}_k^i\bs{D}_k\bs{L}_N\bs{X} -c_k^i \geq 0, k\in [N], i\in[Q], \quad \cdots 
\end{align}
where $P_k$ represents the tracking weights at the $k$th time step, and `$\dots$' denotes additional constraints, such as control input limits, initial conditions, and other system-specific restrictions. $x^\text{ref}_k$ and $y^\text{ref}_k$ are obtained based on Eq.~\eqref{eq:get_safe_reference_points}. Near obstacles, we select $\Lambda_k^\text{ref}$ using Eq.~\eqref{eq:TNBgetNk} as the shifted distance to the track. When no obstacles are present, we set $\Lambda_k^\text{ref} =0$.

\section{DRCC-LPVMPC}
\label{sec:DRCC-LPVMPC}
This section presents the DRCC-LPVMPC formulation, incorporating the linearized obstacle constraints derived earlier and accounting for additive uncertainty to improve robustness.
\subsection{CVaR Reformulation}

Eq.~\eqref{eq:LPVMPC_formulation} gives a form of hard constraint. When considering uncertainty with Eq.~\eqref{eq:single_state_lz}, the constraint is modified as:
\begin{equation}\label{eq:DRCCLPVMPC_ccps}
    \bs{d}_k^i\bs{D}_k\bs{L}_N\bs{X} + \bs{d}_k^i\bs{D}_k\bs{H}_N\bs{\eta}_N -c_k^i \geq 0, k\in [N], i\in[Q]
\end{equation}
we denote $\tilde{\bs{\eta}}_N$ as the unknown distribution of the uncertainty and denote $\tilde{w}^{ki} = -\bs{d}_k^i\bs{D}_k\bs{H}_N\tilde{\bs{\eta}}_N$ as the refined distribution of the uncertainty for the $i$th type of constraint at the $k$th time step. Since we have already converted the $Q$ types of constraints into the same form as Eq.~\eqref{eq:DRCCLPVMPC_ccps}, for simplicity, we will omit the index $i$ for the remainder of this section and only focus on obstacle avoidance constraint, i.e., $\bs{d}^i_k =\bs{d}_k = [a_k,b_k,0,0,0,0]$, $c^i_k = c_k$, and $\tilde{w}^{ki} = \tilde{w}^k$. Thus, Eq.~\eqref{eq:DRCCLPVMPC_ccps} simplifies to: 
$\bs{d}_k\bs{D}_k\bs{L}_N\bs{X} -c_k \geq \tilde{w}^k,\quad k\in [N]$.

The DRCC requires that constraints hold with probability at least $1-\epsilon$ for all distributions $\mathbb{P}$ within the ambiguity set $\mathcal{P}$, also called the ``Wasserstein ambiguity set''. Here, $\epsilon \in (0,1)$ is a specified risk tolerance. Formally, for the decision variable $\bs{X}$, the DRCC is expressed as
\begin{equation}\label{eq:DRCCPs}
    \inf_{\mathbb{P}\in \mathcal{P}} \mathbb{P}\{\bs{d}_k\bs{D}_k\bs{L}_N\bs{X} -c_k \geq \tilde{w}^k, \forall k\in [N]\} \geq1-\epsilon,
\end{equation}
where $\mathbb{P}$ represents a joint probability distribution from a family $\mathcal{P}$. As shown in~\cite{fournier2015rate}, $\mathcal{P}$ defines the Wasserstein distance between the true and the empirical state distributions, which ensures the true distribution is recovered as the sample size approaches infinity. Eq.~\eqref{eq:DRCCPs} represents an infinite-dimensional constraint because it requires the safety constraint to hold for all probability distributions within the ambiguity set. Such constraints cannot be solved directly. Therefore, the DRCC constraint is reformulated using a CVaR representation, which converts it into a finite-dimensional convex constraint that can be efficiently solved while preserving the desired probabilistic safety guarantee. Equivalently, this can be written as
\begin{equation}\label{eq:DRCCPs2}
    \sup_{\mathbb{P}\in \mathcal{P}} \mathbb{P}(\bs{d}_k \bs{D}_k \bs{L}_N\bs{X} - c_k < \tilde{w}^k, \exists k\in [N]) \leq \epsilon.
\end{equation}
The probabilities can be expressed in terms of expectations: $\mathbb{P}(\bs{d}_k\bs{D}_k\bs{L}_N\bs{X}-c_k <\tilde{w}^k,\exists k \in [N]) =\mathbb{E}_\mathbb{P}[\mathbb{I}(\bs{d}_k\bs{D}_k\bs{L}_N\bs{X}-c_k<\tilde{w}^k,\exists k \in [N])]$,
where the indicator function $\mathbb{I}(\cdot)$ can be written as:
\begin{align*}
&\mathbb{I}(\bs{d}_k\bs{D}_k\bs{L}_N\bs{X}-c_k<\tilde{w}^k,\exists k\in[N]) \nonumber\\
&=\begin{cases}
    1,&\bs{d}_k\bs{D}_k\bs{L}_N\bs{X}-c_k <\tilde{w}^k,\\
    0,&\text{otherwise}.
    \end{cases}
\end{align*}
This shows that Eq.~\eqref{eq:DRCCPs} and Eq.~\eqref{eq:DRCCPs2} are both bounded and upper semi-continuous. Assume a total of $J \in \mathbb{Z}_{>0}$ samples are drawn from the true distribution $\mathbb{P}^\infty$. Following Theorem~1 in~\cite{gao2023distributionally} and Theorem~1 in~\cite{blanchet2019quantifying}, the left side of Eq.~\eqref{eq:DRCCPs2} is equivalent to 
\begin{align}\label{eq:DRCCPs3}
    \min_{\lambda_k \geq0}\{&\lambda_k\sigma - \frac{1}{N}\sum_{j\in[J]} \inf_{\tilde{w}}[\lambda_k\|\tilde{w}-\hat{w}_j\|- \nonumber\\
    & \mathbb{I}(\bs{d}_k\bs{D}_k\bs{L}_N\bs{X}-c_k<\tilde{w}^k,\exists k\in[N])]\},
\end{align}
where $\hat{w}_j$, $j\in [J]$, represents the $j$th set of refined sampled disturbance, as mentioned in the notation. $\lambda_k$ and $\sigma$ are dimensionless variables. Specifically, $\lambda_k>0$ is introduced to support the reformulation of the DRCC, while $\sigma>0$ denotes the Wasserstein radius of $\mathcal{P}$.

According to~\cite{xie2021distributionally}, starting from Eq.~\eqref{eq:DRCCPs3}, we first examine the infinite part: $  \inf_{\tilde{w}}[\lambda_k\|\tilde{w}-\hat{w}_j\|- \mathbb{I}(\bs{d}_k\bs{D}_k\bs{L}_N\bs{X}-c_k<\tilde{w}^k,\exists k\in[N])]$,
which is equivalent to
\begin{equation}\label{eq:DRCCPs4}
\min\{\min_{k\in[N]}\quad \inf_{\bs{d}_k\bs{D}_k\bs{L}_N\bs{X}-c_k<\tilde{w}^k}[\lambda_k\|\tilde{w}-\hat{w}_j\|-1],0\}.
\end{equation}
Now, we denote
\[
f(\bs{X},\hat{w}_j) = \min\{\min_{k\in[N]}\max\{\bs{d}_k\bs{D}_k\bs{L}_N\bs{X}-c_k-\hat{w}_j^k,0\},0\},
\]
and use the fact that
\begin{equation}\label{eq:DRCCPfact}
\inf_{\bs{d}_k\bs{D}_k\bs{L}_N\bs{X}-c_k<\tilde{w}^k}\|\tilde{w}-\hat{w}_j\| = \max\{\bs{d}_k\bs{D}_k\bs{L}_N\bs{X}-c_k-\hat{w}_j^k,0\}.
\end{equation}
We obtain the following result:
\begin{theorem}\label{theorem:1}
The DRCC problem in Eq.~\eqref{eq:DRCCPs} is equivalent to
\begin{equation*} 
    \lambda_k\sigma-\epsilon\leq\frac{1}{N}\sum_{j\in[J]}\min\{\lambda_k f(\bs{X},\hat{w}_j)-1,0\}.
\end{equation*}
\end{theorem}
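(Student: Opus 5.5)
The plan is to chain the reformulation steps already set up before the statement.

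First, I use Eq.~\eqref{eq:DRCCPs2}, which is equivalent to Eq.~\eqref{eq:DRCCPs}. Its left-hand side is the worst-case expectation of an indicator of an open set. That function is bounded and upper semi-continuous, as noted above. This lets me invoke the strong duality of Theorem~1 in~\cite{gao2023distributionally} and~\cite{blanchet2019quantifying}, which replaces the supremum over the Wasserstein ball of radius $\sigma$ by the one-dimensional minimization in Eq.~\eqref{eq:DRCCPs3}. The DRCC then reads: there exists $\lambda_k\ge 0$ with
\[
\lambda_k\sigma - \frac{1}{N}\sum_{j\in[J]} \inf_{\tilde w}\bigl[\lambda_k\|\tilde w-\hat w_j\| - \mathbb{I}(\cdot)\bigr]\le \epsilon .
\]
Here the minimum over $\lambda_k$ is attained, or is handled by a limiting argument, so "$\min\le\epsilon$" becomes "$\exists\lambda_k$". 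I keep the paper's normalization $1/N$ on the empirical sum throughout.

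Second, I evaluate the inner infimum for each fixed sample $j$. I split by the value of the indicator.
\begin{itemize}
\item Where the indicator is $0$, the best choice is $\tilde w=\hat w_j$, which gives $0$.
\item Where the indicator is $1$, the infimum is $\lambda_k\cdot\mathrm{dist}(\hat w_j,\text{unsafe set})-1$.
\end{itemize}
The unsafe set is the union over $k$ of the half-spaces $\{\tilde w^k> \bs{d}_k\bs{D}_k\bs{L}_N\bs{X}-c_k\}$. So the distance is the minimum over $k$ of the distances to each half-space. This yields Eq.~\eqref{eq:DRCCPs4}.

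Third, I apply the projection fact Eq.~\eqref{eq:DRCCPfact}. Each half-space distance depends only on the coordinate $\tilde w^k$, and it equals $\max\{\bs{d}_k\bs{D}_k\bs{L}_N\bs{X}-c_k-\hat w_j^k,0\}$. Because $\lambda_k\ge0$, the factor $\lambda_k$ commutes with the $\min_k$. The inner expression therefore collapses to $\min\{\lambda_k f(\bs{X},\hat w_j)-1,0\}$, using the definition of $f$. Substituting this back and rearranging (moving the sum to the right and $\epsilon$ to the left) gives exactly the stated inequality. Equivalence holds in both directions because every step is an identity, not just an upper bound.

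The main obstacle I expect is the rigor of the duality step together with the per-$k$ multiplier. The cited duality theorem gives a single scalar multiplier for the joint constraint, while the statement indexes $\lambda_k$ by $k$. I would resolve this by reading $\lambda_k$ as that common multiplier, which is consistent because the statement fixes one $k$-free expression $f$. I would also check the norm convention: Eq.~\eqref{eq:DRCCPfact} needs the dual norm evaluated on the elementary vectors $\hat{\bs{e}}_{wj}^k$ to equal $1$, for instance with the $\ell_\infty$ or coordinatewise convention. A second delicate point is the closure issue with the strict inequality in the unsafe set. Because the cost is an infimum, it makes no difference whether the half-space is taken open or closed, so $f$ is unaffected.
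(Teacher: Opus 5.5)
Your proposal follows the same chain as the paper's proof: strong duality to reach Eq.~\eqref{eq:DRCCPs3}, the inner infimum evaluated as in Eq.~\eqref{eq:DRCCPs4} via the half-space distance Eq.~\eqref{eq:DRCCPfact}, then pulling $\lambda_k\ge 0$ through $\min_{k}$ and rearranging; you are more careful than the paper about the single shared multiplier and about attainment of the minimum over $\lambda_k$. Two small fixes, both inherited from the paper: the indicator of the open unsafe set is lower, not upper, semicontinuous, so apply the cited duality to its closure $\{\exists k:\ \tilde w^k\ge \bs{d}_k\bs{D}_k\bs{L}_N\bs{X}-c_k\}$ (for $\sigma>0$ neither the worst-case probability nor, by your closure remark, the inner infima change), and your collapse step needs $f(\bs{X},\hat w_j)=\min_{k\in[N]}\max\{\bs{d}_k\bs{D}_k\bs{L}_N\bs{X}-c_k-\hat w_j^k,0\}$, because the extra outer $\min\{\cdot,0\}$ in the printed definition makes $f\equiv 0$.
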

\begin{proof}
See Appendix~\ref{app1}.
\end{proof}

Based on Theorem~\ref{theorem:1}, let $\gamma_k = \frac{1}{\lambda_k}$, the feasible region of DRCC, denoted as $\mathbb{X}_{D}$, then becomes:
\small
\begin{equation}\label{eq:DRCCPs7}
    \mathbb{X}_{D} = \left\{ \bs{X} \in \mathbb{R}^{6+2N} :
    \begin{aligned}
        & \sigma - \epsilon \gamma_k \leq \frac{1}{N} \sum_{j \in [J]} 
        \min \{ f(\bs{X}, \hat{w}_j) - \gamma_k, 0 \}, \\
        & \gamma_k \geq 0,\forall k \in [N],
    \end{aligned}
    \right\}
\end{equation}
\normalsize
which is equivalent to
\begin{equation*}
\sigma-\epsilon\gamma_k-\frac{1}{N}\sum_{j\in[J]}\max\{-f(\bs{X},\hat{w}_j)+\gamma_k,0\}\leq0.
\end{equation*}
Let $\varpi_k = -\gamma_k$, then
\begin{equation*}
\frac{\sigma}{\epsilon}+\varpi_k-\frac{1}{N\epsilon}\sum_{j\in[J]}\max\{-f(\bs{X},\hat{w}_j)-\varpi_k,0\}\leq0,
\end{equation*}
which can be written as:
\begin{equation}\label{eq:DRCCPs8}
\frac{\sigma}{\epsilon}+\min_{\varpi_k}\{\varpi_k+\frac{1}{N\epsilon}\sum_{j\in[J]}\max\{-f(\bs{X},\hat{w}_j)-\varpi_k,0\}\}\leq0.
\end{equation}
Note the definition of CVaR is:
\[
\text{CVaR}_{1-\epsilon}[-f(\bs{X},\tilde{w})] = \min_{\varpi_k}\{\varpi_k+\frac{1}{\epsilon}\mathbb{E}_{\mathbb{P}\tilde{w}}[-f(\bs{X},\tilde{w})-\varpi_k]_t\}.
\]
Thus, Eq.~\eqref{eq:DRCCPs8} can be written in the following CVaR form:
\begin{equation}\label{eq:DRCC_cvar}
    \mathbb{X}_D=\{\bs{X}\in\mathbb{R}^{6+2N}:\frac{\sigma}{\epsilon}+\text{CVaR}_{1-\epsilon}[-f(\bs{X},\tilde{w})]\leq0\},
\end{equation}
and the concept of CVaR is illustrated in Fig.~\ref{fig:cvaR_depicts}, where $-f(\bs{X},\tilde{w})$ is the function and $\epsilon$ refers to the risk tolerance.

\begin{figure}
  \begin{center}
  \includegraphics[width=3.4in]{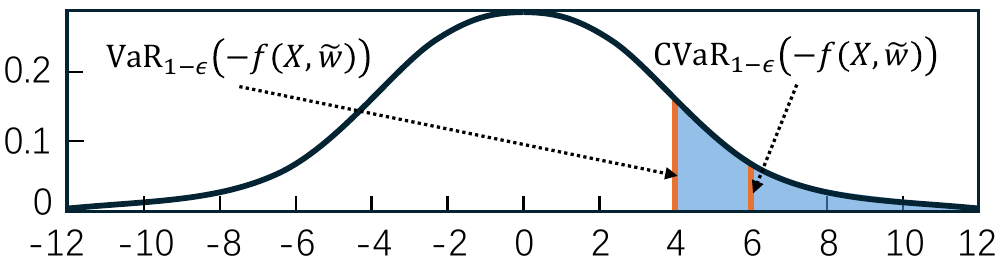}
     \vspace{-3mm}
  \caption{Depiction of CVaR for a function with an unknown distribution (illustrated as Gaussian for convenience). The Value-at-Risk (VaR) represents the threshold below which a specified percentage ($\epsilon$) of the distribution lies, while CVaR accounts for the average loss beyond this threshold, effectively capturing tail risk. The shaded region accounts for $\epsilon \times 100\%$ of the mass of the unknown distribution.}\label{fig:cvaR_depicts}
  \end{center}
\end{figure}

\subsection{Mixed-Integer Program Reformulation}
Based on the CVaR formulation, this section derives the mixed-integer reformulation of the constraints to ensure tractability while maintaining robustness.
\begin{theorem}\label{theorem:2}
The constraints in Eq.~\eqref{eq:DRCCPs7}
\begin{subequations}\label{eq:DRCCP9}
    \begin{align}
        & \sigma - \epsilon \gamma_k \leq\frac{1}{N}\sum_{j\in[J]}\min\{f(\bs{X},\hat{w}_j)-\gamma_k,0\}, \label{eq:DRCCP9_1}\\
        & \gamma_k \geq 0,
    \end{align}
\end{subequations}
is equivalent to $\forall j\in[J]$, $\forall k\in[N]$,
\begin{subequations}\label{eq:DRCCP10}
    \begin{align}
        & \sigma - \epsilon \gamma_k \leq\frac{1}{N}\sum_{j\in[J]}q_j^k, \label{eq:DRCCP10_1}\\
        & q_j^k+\gamma_k\leq\max\{\bs{d}_k\bs{D}_k\bs{L}_N\bs{X}-c_k-\hat{w}_j^k,0\}, \label{eq:DRCCP10_2}\\
        & q_j^k\leq0,\quad \gamma_k \geq 0.\nonumber
    \end{align}
\end{subequations}
\end{theorem}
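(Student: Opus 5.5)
The plan is to prove the equivalence by the standard epigraph/lifting argument. The auxiliary variables $q_j^k$ stand for the terms $\min\{f(\bs{X},\hat{w}_j)-\gamma_k,0\}$. I will show the two inclusions of the projected feasible sets separately: a pair $(\bs{X},\gamma_k)$ satisfies Eq.~\eqref{eq:DRCCP9} if and only if there exist $q_j^k$, $j\in[J]$, such that $(\bs{X},\gamma_k,\{q_j^k\})$ satisfies Eq.~\eqref{eq:DRCCP10}.

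\emph{Step 1 (simplifying $f$).} By its definition, $f(\bs{X},\hat{w}_j)=\min\{\min_{k}\max\{\cdot,0\},0\}$. Each inner term $\max\{\bs{d}_k\bs{D}_k\bs{L}_N\bs{X}-c_k-\hat{w}_j^k,0\}$ is nonnegative. The term $\min\{f-\gamma_k,0\}$ therefore has to be understood in the way it is used in Eq.~\eqref{eq:DRCCP10_2}: per constraint index $k$, with $f$ replaced by the $k$th distance $\max\{\bs{d}_k\bs{D}_k\bs{L}_N\bs{X}-c_k-\hat{w}_j^k,0\}$ of Eq.~\eqref{eq:DRCCPfact}. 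This mirrors how Theorem~\ref{theorem:1} was stated per $k$. I will make this identification explicit and write $g_j^k(\bs{X}):=\max\{\bs{d}_k\bs{D}_k\bs{L}_N\bs{X}-c_k-\hat{w}_j^k,0\}$, so that \eqref{eq:DRCCP9_1} reads $\sigma-\epsilon\gamma_k\le \frac1N\sum_j\min\{g_j^k(\bs{X})-\gamma_k,0\}$. I expect this bookkeeping step to be the main obstacle: reconciling the $\min_k$ inside $f$ with the per-$k$ constraints. If the joint version is intended instead, I would replace $g_j^k$ by $\min_k g_j^k$. I would then note that \eqref{eq:DRCCP10_2} imposed for all $k$ is equivalent to $q_j+\gamma\le \min_k g_j^k$, provided a common $\gamma$ and $q_j$ are used. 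Either way the core argument below is unchanged.

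\emph{Step 2 ($\Rightarrow$).} Given $(\bs{X},\gamma_k)$ feasible for \eqref{eq:DRCCP9}, set $q_j^k:=\min\{g_j^k(\bs{X})-\gamma_k,0\}$. Then $q_j^k\le 0$ and $q_j^k\le g_j^k-\gamma_k$, which gives \eqref{eq:DRCCP10_2}. Moreover $\frac1N\sum_j q_j^k$ equals the right-hand side of \eqref{eq:DRCCP9_1}, so \eqref{eq:DRCCP10_1} holds, and $\gamma_k\ge0$ is inherited.

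\emph{Step 3 ($\Leftarrow$).} Given a feasible $(\bs{X},\gamma_k,\{q_j^k\})$ for \eqref{eq:DRCCP10}, the two constraints $q_j^k\le g_j^k-\gamma_k$ and $q_j^k\le0$ together give $q_j^k\le\min\{g_j^k-\gamma_k,0\}$. Summing over $j$ and using the monotonicity of the sum, $\sigma-\epsilon\gamma_k\le\frac1N\sum_j q_j^k\le\frac1N\sum_j\min\{g_j^k-\gamma_k,0\}$, which is \eqref{eq:DRCCP9_1}.

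Finally, I would remark why the result is called a mixed-integer reformulation. The right-hand side of \eqref{eq:DRCCP10_2} is a maximum of affine functions, which makes the constraint nonconvex (reverse-convex). Standard practice is to encode it with a binary variable and a big-$M$ bound, or to tighten it to the convex inner approximation $q_j^k+\gamma_k\le \bs{d}_k\bs{D}_k\bs{L}_N\bs{X}-c_k-\hat{w}_j^k$. That step is outside the stated equivalence, however, so the proof would end after Step 3.
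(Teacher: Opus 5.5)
Your proposal is correct and is essentially the paper's own proof. Both use the same two-inclusion lifting argument: define $q_j^k$ as the minimum $\min\{\cdot-\gamma_k,0\}$ to get $\mathbb{X}_D\subseteq\mathbb{X}_D^1$, and for the converse bound $q_j^k$ by that minimum and sum over $j$. The paper uses the joint reading $f(\bs{X},\hat{w}_j)=\min_{k}\max\{\bs{d}_k\bs{D}_k\bs{L}_N\bs{X}-c_k-\hat{w}_j^k,0\}$, silently dropping the outer $\min\{\cdot,0\}$ that would otherwise make $f\equiv 0$, and your caveat that $\gamma$ and $q_j$ must then be common across $k$ is exactly what its step $q_j^k\le\min\{\min_{k}\max\{\cdot\}-\gamma_k,0\}$ needs but never states.
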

\begin{proof}
See Appendix~\ref{app2}.
\end{proof}

Based on ~\cite{xie2021distributionally}, Eq.~\eqref{eq:DRCCPs} can be written as
\[
\inf_{\mathbb{P}\in\mathcal{P}}\mathbb{P}\{\bar{\bs{B}}_k(\bs{X})\bs{X}+\bar{b}^k \geq \mathbf{a}(\bs{X})^\top\tilde{e}_w,\forall k\in[N]\} \geq 1-\epsilon,
\]
where $\bar{\bs{B}}_k(\bs{X}) := \bs{d}_k\bs{D}_k\bs{L}_N$, $\bar{b}^k:=-c_k$, and $\mathbf{a}(\bs{X})^\top:=\bs{e}_7^7$. 
This implies the DRCC problem with right-hand uncertainty, where all the elements of $\mathbf{a}(\bs{X})$ are zero except for the last element, which equals one. From this we can deduce the following corollary:
\begin{corollary}\label{corollary1}
    The set $\mathbb{X}_D$ is mixed-integer representable by introducing a sufficiently large set of numbers $M_j\in \mathbb{R}_+,j\in[J]$, i.e.:
    \small
    \begin{equation}\label{eq:MjFormulation}
    \max_{k\in [N]}\max_{\bs{X}\in \mathbb{X}_D}\{|\bs{d}_k\bs{D}_k\bs{L}_N\bs{X}-c_k-\hat{w}_j^k|\}\leq M_j,\forall j\in[J],\forall k\in [N].
    \end{equation}
    \normalsize
    The mixed-integer representation of $\mathbb{X}_D$ is:  for all $j \in [J]$ and $\forall k\in[N]$,
    \small
    \begin{subequations}\label{eq:DRCCP11}
        \begin{align}
            & \sigma - \epsilon \gamma_k \leq\frac{1}{N}\sum_{j\in[J]}q_j^k, \quad q_j^k+\gamma_k\leq s_j^k,\\ 
            & s_j^k\leq \bs{d}_k\bs{D}_k\bs{L}_N\bs{X}-c_k-\hat{w}_j^k+M_j(1-r_j^k), \label{eq:DRCCP11_3}\\
            & s_j^k\leq M_jr_j^k, \label{eq:DRCCP11_4}\\
            & \gamma_k \geq 0,\quad q_j^k\leq0, \quad s_j^k\geq0,\quad r_j^k\in\{0,1\},\nonumber
        \end{align}
    \end{subequations}
    \normalsize
\end{corollary}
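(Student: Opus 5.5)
The plan is to start from Theorem~\ref{theorem:2}, which already rewrites $\mathbb{X}_D$ as the system \eqref{eq:DRCCP10}. In that system the only nonconvex piece is the right-hand side of \eqref{eq:DRCCP10_2}, namely $\max\{g_j^k(\bs{X}),0\}$ with $g_j^k(\bs{X}) := \bs{d}_k\bs{D}_k\bs{L}_N\bs{X}-c_k-\hat{w}_j^k$. It is an affine function inside a max, and it appears on the \emph{larger} side of the inequality, so the constraint is a reverse-convex constraint.

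The standard device is to introduce an auxiliary variable $s_j^k$ and a binary $r_j^k$. The goal is that the constraints \eqref{eq:DRCCP11_3}--\eqref{eq:DRCCP11_4}, together with $s_j^k\ge 0$, describe exactly the set
\[
\{(\bs{X},s_j^k):\ 0\le s_j^k\le \max\{g_j^k(\bs{X}),0\}\}
\]
after projecting out $r_j^k$. The proof will check both inclusions for fixed $j,k$.

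For the first inclusion, suppose $(\bs{X},s,r)$ satisfies \eqref{eq:DRCCP11_3}--\eqref{eq:DRCCP11_4}.
\begin{itemize}
\item If $r=0$, then \eqref{eq:DRCCP11_4} forces $s\le 0$, hence $s=0\le\max\{g,0\}$.
\item If $r=1$, then \eqref{eq:DRCCP11_3} gives $s\le g\le\max\{g,0\}$.
\end{itemize}
Chaining with $q_j^k+\gamma_k\le s_j^k$ then recovers \eqref{eq:DRCCP10_2}.

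For the converse, let $(\bs{X},q,\gamma)$ satisfy \eqref{eq:DRCCP10}, and set $s_j^k=\max\{g_j^k(\bs{X}),0\}$.
\begin{itemize}
\item If $g\ge 0$, choose $r=1$. Then \eqref{eq:DRCCP11_3} holds with equality, and \eqref{eq:DRCCP11_4} needs $g\le M_j$.
\item If $g<0$, choose $r=0$. Then $s=0$, \eqref{eq:DRCCP11_4} holds trivially, and \eqref{eq:DRCCP11_3} requires $0\le g+M_j$, i.e.\ $-g\le M_j$.
\end{itemize}
Both requirements are exactly $|g_j^k(\bs{X})|\le M_j$, which is what the big-$M$ hypothesis \eqref{eq:MjFormulation} guarantees for $\bs{X}\in\mathbb{X}_D$. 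The remaining constraints ($\sigma-\epsilon\gamma_k\le\frac1N\sum_j q_j^k$, $q_j^k\le 0$, $\gamma_k\ge0$) are carried over unchanged. Equivalence of the projection onto $\bs{X}$ then follows from Theorem~\ref{theorem:2}.

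The delicate step is the use of $M_j$. The bound \eqref{eq:MjFormulation} is stated only over $\mathbb{X}_D$, so the converse direction is valid precisely because we start from a point already in $\mathbb{X}_D$. The forward direction never uses $M_j$, apart from observing that a large $M_j$ only relaxes \eqref{eq:DRCCP11_3}, so no spurious points are added. I would also note the right-hand uncertainty structure ($\mathbf{a}(\bs{X})^\top=\bs{e}_7^7$), which ensures $g_j^k$ is affine in $\bs{X}$ with a sample-only shift. This is what makes \eqref{eq:DRCCP11} linear apart from the binaries, matching the setting of \cite{xie2021distributionally}, whose corresponding mixed-integer result could alternatively be cited directly.
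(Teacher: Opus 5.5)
Your proposal is correct and follows the same route as the paper. Starting from Theorem~\ref{theorem:2}, both linearize $\max\{\bs{d}_k\bs{D}_k\bs{L}_N\bs{X}-c_k-\hat{w}_j^k,0\}$ in \eqref{eq:DRCCP10_2} with big-$M$ constraints and a case split on the binary $r_j^k$. Your version is in fact more complete than the paper's sketch: you check both inclusions separately and show that the bound \eqref{eq:MjFormulation} is needed only in the converse direction, namely to make \eqref{eq:DRCCP11_3} hold when $r_j^k=0$ and \eqref{eq:DRCCP11_4} hold when $r_j^k=1$, which the paper leaves implicit.
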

\begin{proof}
See Appendix~\ref{app3}.
\end{proof}
\begin{remark}
    In contrast to the DRO framework, which optimizes the ``worst-case'' distribution over all possible distributions in $\mathcal{P}$, the proposed approach seeks the ``worst-case'' robust control with respect to the probability distribution inferred from sampled data. This is the sense in which our method is ``data-driven distributionally robust''. Furthermore, Eq.~\eqref{eq:DRCCP11} reformulates the infinite chance constraints in Eq.~\eqref{eq:DRCCPs} into first-order constraints by introducing dimensionless variables $q_j^k$, $\gamma_k$, $s_j^k$, and $r_j^k$. This reformulation enables the problem to be cast as a QP, making it more tractable for computation.
\end{remark}

Fig.~\ref{fig:DRCC_summary} shows an overall summary of how the infinite-dimensional DRCC is converted into convex constraints. The data-driven method leverages a sampled data set $\hat{w}_j$ to represent an empirical distribution that satisfies the Wasserstein ambiguity set $\mathcal{P}$, providing a computational trade-off.
\begin{figure}
    \centering
    \includegraphics[width=0.69\linewidth]{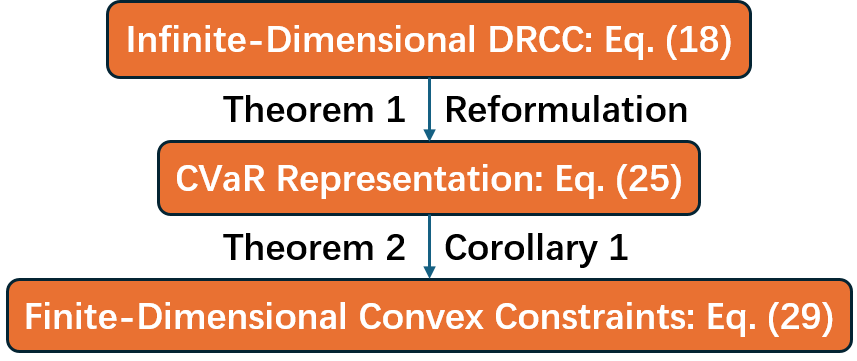}\vspace{-3mm}
    \caption{Summary of the conversion process from infinite-dimensional DRCC to convex constraints using a data-driven approach.}
    \label{fig:DRCC_summary}
\end{figure}

\subsection{Terminal Constraints}

Before introducing the terminal constraints, we define the expected state along the horizon as $\bar{\bs{z}}_{t+k} = \mathbb{E}_{\mathbb{P}}\{\bs{z}_{t+k}\}, k=0,1,...,N,$ where $\mathbb{P}\in\mathcal{P}$ is defined in Eq.~\eqref{eq:DRCCPs}. Assuming the disturbances of the ambiguity set $\mathcal{P}$ satisfy $\mathbb{E}_\mathbb{P}\{\bs{\xi}_{t+k}\} = \bs{0}_6$, the expected system dynamics in Eq.~\eqref{eq:single_uncertain_eq} become:
\begin{equation}\label{eq:ExpectSystem}
    \bar{\bs{z}}_{t+k+1} = \bs{A}_{t+k} \bar{\bs{z}}_{t+k} + \bs{B}_{t+k}\bs{u}_{t+k}.
\end{equation}
In a more compact form, this is: $\bar{\bs{z}}_{t+1} = \bs{A}_{t} \bar{\bs{z}}_{t} + \bs{B}_{t}\bs{u}_{t}.$

Terminal constraints are critical for guaranteeing system stability and performance in MPC~\cite{mayne2000survey}. Typically, for an autonomous system whose configuration satisfies
$\det(\bs{A}_t) \neq 0$, where $\det(\cdot)$ denotes the determinant of a matrix, the following terminal constraints are imposed on the nominal state: $
\bar{\bs{z}}_{t+N} \in \mathcal{Z}_T$,
where $\mathcal{Z}_T$ is a positive invariant set satisfying
\begin{equation}\label{eq:terminalConstraints}
    \bs{A}_t \bar{\bs{z}} \in \mathcal{Z}_T, \forall \bs{z} \in \mathcal{Z}_T.
\end{equation}
Based on Eq.~\eqref{eq:MjFormulation}, the system constraints along the horizon are reformulated in terms of expected states as
\begin{equation}\label{eq:systemConst}
    \bs{d}_{k}\bar{\bs{z}}_k-c_k\leq M_j,\quad\forall j\in[J],\quad\forall k\in [N],
\end{equation}
where $\bs{d}_k$ is defined in Eq.~\eqref{eq:DRCCLPVMPC_ccps}. In addition, for all $\bs{z} \in \mathcal{Z}$ and $j \in [J]$, the following condition must hold:
\begin{equation}\label{Eq:systemBound}
    \bs{d}\bar{\bs{z}} \leq M_j -c.
\end{equation}
\begin{remark} 
    Eq.~\eqref{eq:ExpectSystem} is derived by assuming that the expectation of the system disturbance is zero. This is a common assumption in DRCC problems, e.g., \cite{li2021distributionally}, even when disturbances are unbounded. In our proposed approach, if the disturbance expectation is not zero, Eq.~\eqref{Eq:systemBound} can still be satisfied by appropriately shifting the values of $M_j$. As shown in the proof of Theorem~\ref{theorem:RecursiveFeasibility}, both Eq.~\eqref{eq:terminalConstraints} and Eq.~\eqref{Eq:systemBound} are essential to guarantee the recursive feasibility of the proposed method.
\end{remark}

\subsection{MPC Formulation and Recursive Feasibility}
After obtaining Eq.~\eqref{eq:DRCCP11}, we can now formulate the DRCC-LPVMPC as follows:

\begin{align}\label{DRCCLPVMPC_formulation}
    \min_{\bs{X}} \sum_{k=1}^N \{&P_k[(\bs{e}_0^6\bs{D}_k\bs{L}_N\bs{X}-x^\text{ref}_k)^2 + (\bs{e}_1^6\bs{D}_k\bs{L}_N\bs{X} - y^\text{ref}_k)^2] \nonumber\\
    &+ \rho_k\sum_{j\in[J]} (q_j^k+\gamma_k)\}\nonumber\\
    \text{s.t.}&\quad Eq.~\eqref{eq:DRCCP11},\forall i \in [Q],\forall j\in [J], \forall k\in[N], \cdots
\end{align}
where $P_k$ and $\rho_k$ denote the weights for the tracking and DRCC penalties, respectively. The DRCC penalty is added to tighten the constraints and prevent suboptimal solutions. Without this penalty term, the condition $q_j^k+\gamma_k \leq s_j^k$ could be trivially satisfied by selecting a sufficiently small $q_j^k + \gamma_k$. The ``$\dots$'' represents other hard constraints that do not need to be converted into DRCC form, such as the range of control inputs, initial conditions, etc.

\begin{theorem}\label{theorem:RecursiveFeasibility}
   If there exists a feasible solution at time $t=0$ and the determinant of the system matrix $\det (\bs{A}_k) \neq 0$, then DRCC-LPVMPC is recursively feasible.
\end{theorem}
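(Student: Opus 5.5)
The proof follows the standard shifted-sequence argument for MPC with a terminal invariant set. Suppose the problem in Eq.~\eqref{DRCCLPVMPC_formulation} is feasible at time $t$, with optimal decision $\bs{X}_t^\star=[\bs{z}_t^\top,\bs{u}_{t|t}^{\star\top},\dots,\bs{u}_{t+N-1|t}^{\star\top}]^\top$. From it we have the corresponding auxiliary variables $(q_j^{k\star},\gamma_k^\star,s_j^{k\star},r_j^{k\star})$ and the expected (nominal) trajectory $\bar{\bs{z}}_{t+k|t}$ generated by Eq.~\eqref{eq:ExpectSystem}, with $\bar{\bs{z}}_{t+N|t}\in\mathcal{Z}_T$. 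At time $t+1$ we construct a candidate
\[
\bs{X}_{t+1}=[\bs{z}_{t+1}^\top,\bs{u}_{t+1|t}^{\star\top},\dots,\bs{u}_{t+N-1|t}^{\star\top},\bs{u}_{t+N}^\top]^\top.
\]
The last input $\bs{u}_{t+N}$ is chosen so that the nominal terminal state remains in $\mathcal{Z}_T$. The simplest choice is $\bs{u}_{t+N}=\bs{0}$, which by Eq.~\eqref{eq:terminalConstraints} gives $\bar{\bs{z}}_{t+N+1}=\bs{A}_{t+N}\bar{\bs{z}}_{t+N|t}\in\mathcal{Z}_T$. The condition $\det(\bs{A}_k)\neq0$ is used to argue that the propagation operators $\bs{E}_j^k$ are invertible. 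Consequently the shifted nominal trajectory, re-expressed through $\bs{L}_N$ from the new initial condition, is well defined and coincides with the tail of the previous nominal trajectory. This holds once we use that the expected disturbance is zero, so $\bar{\bs{z}}_{t+1|t+1}$ and $\bar{\bs{z}}_{t+1|t}$ agree in expectation.

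Next we check each constraint of Eq.~\eqref{eq:DRCCP11} for the candidate. For indices $k=1,\dots,N-1$, the constraints at time $t+1$ correspond to indices $k+1$ at time $t$. We therefore reuse the variables $(q_j^{k+1\star},\gamma_{k+1}^\star,s_j^{k+1\star},r_j^{k+1\star})$, and Eq.~\eqref{eq:DRCCP11} holds because the left-hand nominal quantity $\bs{d}_k\bs{D}_k\bs{L}_N\bs{X}$ is unchanged. The new final index $k=N$ is the genuinely new constraint. Here the plan is to use the big-$M$ structure: choose $r_j^N=0$, which forces $s_j^N=0$, and then Eq.~\eqref{eq:DRCCP11_3} reduces to $0\le \bs{d}_N\bar{\bs{z}}_{t+N+1}-c_N-\hat{w}_j^N+M_j$. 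This is guaranteed by Eq.~\eqref{Eq:systemBound} and Eq.~\eqref{eq:systemConst} applied to the terminal nominal state in $\mathcal{Z}_T\subseteq\mathcal{Z}$, using the bound in Eq.~\eqref{eq:MjFormulation} on $|\hat{w}_j^N|$ absorbed into $M_j$.

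It remains to satisfy the first line of Eq.~\eqref{eq:DRCCP11}, namely $\sigma-\epsilon\gamma_N\le\frac1N\sum_j q_j^N$ with $q_j^N+\gamma_N\le 0$. We set $q_j^N=-\gamma_N$, which gives $\sigma\le\epsilon\gamma_N-\frac{J}{N}\gamma_N$. This is the \emph{main obstacle}. It requires $\epsilon>J/N$, or a different normalization, so the choice $r_j^N=0$ may be too crude. The first alternative to try is $r_j^N=1$ with $s_j^N=\min\{M_j,\bs{d}_N\bar{\bs{z}}_{t+N+1}-c_N-\hat{w}_j^N\}$. We would then pick $\gamma_N$ large and $q_j^N=s_j^N-\gamma_N$. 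Feasibility would follow if the terminal set is contained in the region where the nominal constraint margin dominates the samples, an assumption I would state explicitly as part of the invariance requirement on $\mathcal{Z}_T$.

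The remaining constraints are the input bounds and the initial condition. The input bounds hold since the inputs are either copied or are the admissible terminal input. The initial condition holds by construction, since the candidate starts from the measured $\bs{z}_{t+1}$. Hence the problem is feasible at $t+1$, and induction from the feasible solution at $t=0$ yields recursive feasibility.
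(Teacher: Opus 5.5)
\textbf{Verdict: the proposal has two genuine gaps.} Your skeleton matches the paper's: shift the previous optimal inputs, append $\bs{u}_{t+N}=\bs{0}$, and use invariance Eq.~\eqref{eq:terminalConstraints}. What is missing is the step that makes the shifted tail reusable, and the appended stage is left unproved.

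\textbf{Gap 1: initialization.} You start the candidate from the measured $\bs{z}_{t+1}$, as Eq.~\eqref{eq:lpvmpcx0} requires. You then assert that the stage quantities $\bs{d}_k\bs{D}_k\bs{L}_N\bs{X}$ are unchanged, because $\bar{\bs{z}}_{t+1|t+1}$ and $\bar{\bs{z}}_{t+1|t}$ agree ``in expectation''. These two claims are incompatible.
\begin{itemize}
\item The constraints in Eq.~\eqref{eq:DRCCP11} are deterministic inequalities in the realized $\bs{X}$.
\item So each stage margin of your candidate differs from the corresponding margin at time $t$ by a term linear in the realization $\bs{\xi}_t=\bs{z}_{t+1}-\bar{\bs{z}}_{t+1|t}$, propagated through the products $\bs{E}_j^k$. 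It is the realization that matters, not its mean.
\item No bound on $\bs{\xi}_t$ is assumed, so some realization can push a copied stage out of the feasible set of Eq.~\eqref{eq:DRCCP11}.
\end{itemize}
The missing idea is the paper's \emph{binary initialization}. If the problem with the measured state is infeasible, the controller is re-initialized from the previous prediction $\bs{z}^*_{t+1|t}$. The nominal trajectory at $t+1$ is then exactly the tail of the one at $t$. Only then do the stage constraints for $t+1,\dots,t+N-1$ carry over. Even so, copying $(q_j^{k+1},\gamma_{k+1},s_j^{k+1},r_j^{k+1})$ assumes that the updated samples $\hat{w}_j^k$ and coefficients $\bs{d}_k,c_k$ at $t+1$ coincide with the shifted ones at $t$; state this explicitly. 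Also, $\det(\bs{A}_k)\neq 0$ is not what makes the shifted trajectory well defined, since forward propagation is always well defined.

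\textbf{Gap 2: the appended stage.} You show yourself that $r_j^N=0$ forces $\sigma\le(\epsilon-J/N)\gamma_N$. The $r_j^N=1$ route needs an extra hypothesis on $\mathcal{Z}_T$ that is not in the theorem. As written, you therefore prove a different statement.

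The paper does not try to certify the CVaR-type inequality $\sigma-\epsilon\gamma_N\le\frac{1}{N}\sum_{j}q_j^N$ at the new stage. It checks only two things there:
\begin{itemize}
\item the $M_j$-type system constraint Eq.~\eqref{eq:systemConst}, via Eq.~\eqref{Eq:systemBound} applied to the terminal state;
\item invariance of $\mathcal{Z}_T$ under $\bs{u}_{t+N}=\bs{0}$.
\end{itemize}
You are right that this bound alone does not imply the full Eq.~\eqref{eq:DRCCP11} at the appended stage. A hypothesis like yours, that $\mathcal{Z}_T$ lies where the DRCC constraints hold, is what a complete verification would need. But it must be added to the theorem's assumptions rather than deferred. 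The first gap must still be closed by binary initialization.
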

\begin{proof}
See Appendix~\ref{app4}.
\end{proof}
The proof shows that given feasibility at time $t$, a feasible sequence at $t+1$ can be constructed using the previous optimal control sequence and terminal constraints. By verifying that both system and terminal constraints are satisfied, we conclude that DRCC-LPVMPC is recursively feasible.

After reformulation, Eq.~\eqref{DRCCLPVMPC_formulation} leads to a re-linearized QP scheme that is essentially equivalent to the real-time iteration (RTI) method in~\cite{diehl2002real}. In that method, a multiple-shooting technique is used to discretize the nonlinear continuous-time optimal control problem, an exponential map is applied as the integrator, and only the first sequential QP step is computed at each sampling instant. Hence, the convergence properties of both Eq.~\eqref{eq:LPVMPC_formulation} and Eq.~\eqref{DRCCLPVMPC_formulation} follow directly from the known RTI convergence guarantees~\cite{diehl2005real}.

\begin{remark}
$\det (\bs{A}_k)$ can be calculated based on the vehicle's specifications and the preset state and control limits. For example, in the numerical experiments below, $\bs{A}_k$ defined in Eq.~\eqref{eq:lpv_rewrite} is positive definite as calculated using the data in Table~\ref{tab:car_configure} and Table~\ref{tab:AppliedDisturbance}. This ensures that the state predictions remain within the same invariant set. 
\end{remark}

\subsection{Discussion on Tuning Parameters}
In practice, $\sigma$, $\epsilon$, and $\rho$ are tuning parameters that balance feasibility, conservatism, and robustness. The Wasserstein radius $\sigma$ determines how tight or loose the ambiguity set is relative to the empirical distribution: smaller values yield solutions closer to the empirical data but may reduce feasibility, while larger values enlarge the set and provide greater robustness at the cost of conservatism. The risk tolerance $\epsilon$ specifies the acceptable probability of constraint violation, with smaller values enforcing stricter constraint satisfaction but shrinking the feasible region. Finally, the penalty weight $\rho$ regulates how conservatively chance constraints are enforced by enlarging safety margins; higher values prevent overly optimistic or degenerate solutions and ensure more meaningful and reliable performance.

\section{Numerical Experiment Setup}
\label{sec:Setup}
To evaluate the performance of the proposed DRCC-LPVMPC, we conducted numerical simulations comparing DRCC-LPVMPC, LPVMPC, and NMPC for a racing-track trajectory-tracking task. All controllers were tested under identical vehicle, environment, and reference configurations. Two conditions were considered: undisturbed and disturbed.
In the undisturbed condition, the experiments assumed the current model deviation was zero in LPVMPC, and DRCC-LPVMPC relied only on historical sample data calculated by model deviation, i.e., the measurement errors are zero. In the disturbed condition, measurement errors are no longer assumed to be zero and are introduced using random disturbances to reflect the real-world scenarios.
All simulations were performed on a 12th-generation Intel(R) i7-12700K CPU with 32 GB RAM, and QP problems were solved using the ProxQP solver~\cite{bambade2022prox}. All simulations use the same 1:43 scaled testbed, shown in Fig.~\ref{fig:track_lpvmpc} from~\cite{liniger2015optimization}, with the green curve representing the reference raceline for tracking, generated using B-splines from the parameterized track centerline. The positions of obstacles (black boxes) are unknown to the vehicle from the start, and safety regions (Section~\ref{section:linearizeObsAvoidCons}) shown as black dashed lines.
Both LPVMPC and DRCC-LPVMPC in this work utilize the same 1:43 scaled Autonomous RC Racing (ORCA) platform from ETH Z\"{u}rich. The vehicle configuration is listed in Table~\ref{tab:car_configure}.

\begin{table}[b!]
\addtocounter{table}{1}
	\caption{Dynamic model parameters of the 1:43 scaled ORCA car}
	\vspace{-2mm}
	\label{tab:car_configure}
	\renewcommand{\arraystretch}{1.2} 
	\begin{tabularx}{\columnwidth}{|c X|c X|}
		\hline
		\textbf{Parameter} & \textbf{Value} & \textbf{Parameter} & \textbf{Value} \\ \hline
		$l_f$           & 0.029 m         & $C_{\alpha r}$     & 2.24           \\ \hline
		$l_r$           & 0.033 m         & $v_{x_{\min}}$  & 1.2 m/s         \\ \hline
		$m$            & 0.041 kg         & $v_{x_{\max}}$  & 1.5 m/s         \\ \hline
		$l_w$           & 0.03 m          & $v_{y_{\min}}$  & -0.5 m/s        \\ \hline
		$I_z$ & $27.8 \times 10^{-6}$ $\text{kg·m}^2$ & $v_{y_{\max}}$  & 0.5 m/s \\ \hline
		$B_f$              & 2.579          & $\omega_{\min}$  & -20.94 rad/s    \\ \hline
		$C_f$              & 1.2            & $\omega_{\max}$  & 20.94 rad/s     \\ \hline
		$D_f$              & 0.192          & $a_{c_{\min}}$ & -0.4 $\text{m/s}^2$   \\ \hline
		$B_r$              & 3.3852         & $a_{c_{\max}}$ & 0.4 $\text{m/s}^2$    \\ \hline
		$C_r$              & 1.2691         & $\delta_{\min}$ & -0.59 rad      \\ \hline
		$D_r$              & 0.1737         & $\delta_{\max}$ & 0.59 rad       \\ \hline
		$C_{\alpha f}$     & 1.78           &                     &              \\ \hline
	\end{tabularx}
\end{table}

\subsection{LPVMPC Experiment Strategy}\label{section:LPVMPC}
As discussed in Section~\ref{section:linearizeObsAvoidCons}, this work calculates the local reference path in each iteration by converting between the TNB frame and the Cartesian frame. The points on $R_\text{r}$ are represented by the following features: $[T,\Lambda,\theta,S]$, where $T$ and $\Lambda$ are defined in Section~\ref{section:linearizeObsAvoidCons}, $\theta$ is the tangential direction at each point, and $S$ is the travel distance along $R_\text{r}$ up to the current point. Assume the horizon (maximum number of time steps) for each iteration is $N$. Given the reference raceline $R_\text{r}$, we denote $\Gamma^T_\text{in}$, $\Gamma^S_\text{in}$, and $\Gamma^\theta_\text{in}$ as its transform functions for calculating $T$, $S$, and $\theta$, respectively. 
The local reference points $[x^\text{ref},y^\text{ref}]$ can be obtained using Algorithm~\ref{alg:local_reference}.

\begin{algorithm}[hb!]
\caption{Local Reference Calculation}
\label{alg:local_reference}
\DontPrintSemicolon
\SetAlgoVlined
\SetKwInOut{Input}{Input}
\SetKwInOut{Output}{Output}
\Input{$R_\text{r}$, $[x_0,y_0,v_{x_0}]$, $N$}
\Output{$[x^\text{ref},y^\text{ref},\theta^\text{ref},v_x^\text{ref}]$}
\nl $T_0 \gets \Gamma_\text{in}^T(x_0,y_0)$, $S_0 \gets \Gamma_\text{in}^S(T_0)$\;
\For{$ k = 1$ to $N+1$}{ 
    \nl $v_{x_k}^\text{ref} \gets \min(v_{x_{k-1}}+a_{c_\text{max}}\Delta t,v_{x_\text{max}})$\;
    \nl $S_k^\text{ref} \gets S_{k-1}+v_{x_k}^\text{ref}\Delta t$, $T_k^\text{ref} \gets \Gamma_\text{r}^T(S^\text{ref})$\;
    \nl \lIf{$T_k^\text{ref} \in [T_0,T_1]$}{
         $\Lambda^\text{ref}_k \gets $ Eq.~\eqref{eq:TNBgetNk}}
    \nl \lElse{
        $\Lambda^\text{ref}_k \gets 0$
    }
    \nl $[x_k^\text{ref},y_k^\text{ref}] \gets \Gamma_\text{r}(T_k^\text{ref},\Lambda_k^\text{ref})$, $\theta_k^\text{ref} \gets \Gamma_\text{in}^{\theta}(T_k^\text{ref})$
}
\end{algorithm}

Algorithm~\ref{alg:local_reference} begins by calculating the initial position's $T_0$ and $S_0$ from $[x_0,y_0]$ (line 1). It then iteratively computes the longitudinal velocity and distance, as well as $T^\text{ref}$ (line 2-3). Next, $\Lambda^\text{ref}$ is determined based on Eq.~\eqref{eq:TNBgetNk} (lines 4-5). Finally, the Cartesian coordinates of the local reference $[x^\text{ref},y^\text{ref}]$ and $\theta^\text{ref}$ are calculated using the obtained $[T^\text{ref},\Lambda^\text{ref}]$ (line 6).

In the racing track scenario of this work, in addition to obstacle avoidance, it is also necessary to ensure the vehicle stays within the track boundaries. A common approach is to assume that the vehicle runs within a corridor, where both the upper and lower boundaries of the corridor can be formulated using a method similar to the one discussed in Section~\ref{section:linearizeObsAvoidCons}. For more details, we refer the reader to~\cite{liniger2015optimization}. In this work, we denote the upper and lower corridor boundary constraints, along with the obstacle avoidance constraints, as follows:
\begin{subequations}\label{eq:LPVSumupConstraints}
	\begin{align}
		&\bs{d}_k^u\bs{D}_k\bs{L}_N\bs{X}-c_k^u \geq 0, \forall k\in [N],\label{eq:LPVSumupConstraints1}\\
		&\bs{d}_k^l\bs{D}_k\bs{L}_N\bs{X}-c_k^l \geq 0,\forall k \in [N], \label{eq:LPVSumupConstraints2}\\
		&\bs{d}_k^o\bs{D}_k\bs{L}_N\bs{X}-c_k^o \geq 0,\forall k \in [N], \label{eq:LPVSumupConstraints3}
	\end{align}
\end{subequations}
where $\bs{d}_k^u$, $c_k^u$, $\bs{d}_k^l$, and $c_k^l$ are the coefficients of the upper and lower tangential lines, respectively, and $\bs{d}_k^o$, $c_k^o$ are the coefficients defined by Eq.~\eqref{eq:obs_linear_constraint_standard_form}. 
We can further represent the initial constraint as follows:
\begin{equation}\label{eq:lpvmpcx0}
    \bs{d}_0\bs{X} = \bs{z}_0,
\end{equation}
where $\bs{d}_0 = [\bs{1}_{1\times6},\bs{0}_{1\times2N}] \in \mathbb{R}^{1\times(6+2N)}$. Then, the LPVMPC in Eq.~\eqref{eq:LPVMPC_formulation} can be written as follows:
\begin{align}\label{eq:LPVMPC_experimental_formulation}
    \min_{\bs{X}}& \sum_{k=1}^N \{P_k[(\bs{e}_0^6\bs{D}_k\bs{L}_N\bs{X}-x^\text{ref}_k)^2 + (\bs{e}_1^6\bs{D}_k\bs{L}_N\bs{X} - y^\text{ref}_k)^2]\}\nonumber\\
    \text{s.t.}& \quad Eq.~\eqref{eq:LPVSumupConstraints}\quad Eq.~\eqref{eq:lpvmpcx0}, \quad \bs{X}\in\mathcal{X}, \quad \bs{u}\in \mathcal{U}. 
\end{align}

Algorithm~\ref{alg:LPVMPC} shows the process for obtaining the decision variable $\bs{X}$. The reference state is obtained from Algorithm~\ref{alg:local_reference} (line 1). With this reference state, we can compute $\bs{L}_N$ as described in Section~\ref{sec:Dyanmics} (line 5). Finally, the decision variable $\bs{X}$ is determined by solving Eq.~\eqref{eq:LPVMPC_experimental_formulation} (line 6).

\begin{algorithm}[tbp]
\caption{LPVMPC}
\label{alg:LPVMPC}
\DontPrintSemicolon
\SetAlgoVlined
\SetKwInOut{Input}{Input}
\SetKwInOut{Output}{Output}
\SetKwIF{IfNot}{ElseIfNot}{}{if not}{then}{else if not}{}{}
\Input{$\bs{z}_0$, $\Gamma_\text{r}$, $N$}
\Output{$\bs{X}$}
\nl $[x^\text{ref},y^\text{ref},\theta^\text{ref},v_x^\text{ref}] \gets $ Algorithm~\ref{alg:local_reference}\;
\nl $\{\bs{A}\} \gets \varnothing$, $\{\bs{B}\} \gets \varnothing$\;
\For{$ k = 1$ to $N+1$}{
   \nl \lIfNot{initialized}{
        $\bs{p}_k \gets [v_{x_k}^\text{ref},0,0,\theta_k^\text{ref}]$}
    \lElse{
        $\bs{p}_k\gets \bs{p}_{i-1|k+1}^\star$}
    \nl $\{\bs{A}\} \gets \bs{A}_k \cup \{\bs{A}\}$, $\{\bs{B}\} \gets \bs{B}_k \cup \{\bs{B}\}$\;
}
    \nl $\bs{L}_N \gets$ Eq.~\eqref{eq:Ln}\;
    \nl $\bs{X} \gets$ Eq.~\eqref{eq:LPVMPC_experimental_formulation}
\end{algorithm}

\subsection{DRCC-LPVMPC Experiment Setup}\label{section:drcclpvmpc setup}
By running the experiments in Section~\ref{section:LPVMPC} under the assumption of no disturbance, we can calculate the model discrepancy between the nonlinear dynamic model (described in Section~\ref{section:nonlinearModelsamples}) and the quasi-LPV model at each iteration. For an optimized output $\bs{X}_i$ obtained from Algorithm~\ref{alg:LPVMPC}, let $\bs{z}_i^{d}$ denote the nonlinear state using the first control input in $\bs{X}_i$, and $\bs{z}_i^l$ the corresponding quasi-LPV state. The sampled model difference can then be obtained by:
$\bs{\hat{\tau}}_{i} = \bs{z}_{i+1}^{d} - \bs{z}_{i+1}^l, i \in \mathbb{Z}_{\geq 0}.$
We further introduce random disturbances as additive uncertainty to the system, which contribute to the accumulation of model uncertainty:
\begin{equation}\label{eq:getxidata}
\hat{\bs{\xi}}_{i} = \bs{\hat{\tau}}_{i}+\mathbb{U}(-0.01,0.01),i\in\mathbb{Z}_{>0}.
\end{equation}
The error $\mathbb{U}(-0.01, 0.01)$ represents a uniformly distributed disturbance with values ranging from $-0.01$ to $0.01$. The purpose of this term is to simulate other uncertainties except the quasi-LPV process, the simulated $\bs{\hat{\xi}}$ shows how the used quasi-LPV model differs with the actual vehicle model due to any reasons, which is close to the real world condition as the only change in real world reflects on the model is a different sampled $\bs{\hat{\xi}}$. 

From the collected disturbance set, we randomly select $J$ samples to form the initial disturbance set $\bs{\bs{\Xi}} = \{\hat{\bs{\xi}}_0, \hat{\bs{\xi}}_1, \dots, \hat{\bs{\xi}}_J\}$, which is then used in the DRCC-LPVMPC. As the vehicle progresses, a new disturbance $\hat{\bs{\xi}}_i$ is calculated at each iteration and replaces the first element in the set, updating it as $\bs{\Xi} = \hat{\bs{\xi}}_i \cup \bs{\Xi}[1:]$.

Now we can convert Eq.~\eqref{eq:LPVMPC_experimental_formulation} into a DRCC problem, in this work, we will keep the hard constraints of the corridor and focus on the obstacle avoidance, which is converted to following based on Eq.~\eqref{eq:DRCCP11}. For all $j \in [J]$ and $k\in[N]$,
\small
\begin{equation}
\label{eq:DRCC-LPVMPC_formulation}
\begin{aligned}
        & \sigma - \epsilon \gamma_k \leq\frac{1}{N}\sum_{j\in[J]}q_j^k,\quad q_j^k+\gamma_k\leq s_j^k, \\
        & s_j^k\leq \bs{d}^o_k\bs{D}_k\bs{L}_N\bs{X}-c^o_k-\hat{w}_j^k+M_j(1-r_j^k), \\
        & s_j^k\leq M_jr_j^k, \quad \gamma_k \geq 0,\quad q_j^k\leq0,\quad s_j^k\geq0,\quad r_j^k\in\{0,1\}.
\end{aligned}
\end{equation}
\normalsize
Thus, the DRCC-LPVMPC formulation in Eq.~\eqref{DRCCLPVMPC_formulation} can be written as follows:
\begin{align}\label{DRCCLPVMPC_formulation_experiments}
    \min_{\bs{X}} \sum_{k=1}^N \{&P_k[(\bs{e}_0^6\bs{D}_k\bs{L}_N\bs{X}-x^\text{ref}_k)^2 + (\bs{e}_1^6\bs{D}_k\bs{L}_N\bs{X} - y^\text{ref}_k)^2] \nonumber\\
    &+ \rho_k\sum_{j\in[J]} (q_j^k+\gamma_k)\} \\
    \text{s.t.}&\quad Eq.~\eqref{eq:lpvmpcx0}, \quad Eq.~\eqref{eq:DRCC-LPVMPC_formulation},\quad \forall j\in [J], \quad \forall k\in[N], \nonumber\\
    & \quad Eq.~\eqref{eq:LPVSumupConstraints1},\quad  Eq.~\eqref{eq:LPVSumupConstraints2}, \quad \bs{X}\in\mathcal{X}, \quad \bs{u}\in \mathcal{U}.\nonumber
\end{align}

Algorithm~\ref{alg:DRCC-LPVMPC} outlines the process for calculating the decision variable in DRCC-LPVMPC. It follows a similar procedure to Algorithm~\ref{alg:LPVMPC} but includes an additional step for computing $\bs{H}_N$ (line 1-5). The decision variable \bs{X} is then obtained by solving Eq.~\eqref{DRCCLPVMPC_formulation_experiments} (line 6). Finally, the disturbance set is updated for using in the next iteration (line 7).
\begin{algorithm}[htbp]
\caption{DRCC-LPVMPC}
\label{alg:DRCC-LPVMPC}
\DontPrintSemicolon
\SetAlgoVlined
\SetKwInOut{Input}{Input}
\SetKwInOut{Output}{Output}
\SetKwIF{IfNot}{ElseIfNot}{}{if not}{then}{else if not}{}{}
\Input{$\bs{z}_0$, $R_\text{r}$, $N$}
\Output{$\bs{X}$}
\nl $[x^\text{ref},y^\text{ref},\theta^\text{ref},v_x^\text{ref}] \gets $ Algorithm~\ref{alg:local_reference}\; 
\nl $\{\bs{A}\} \gets \varnothing$, $\{\bs{B}\} \gets \varnothing$\;
\For{$ k = 1$ to $N+1$}{
   \nl \lIfNot{initialized}{
        $\bs{p}_k \gets [v_{x_k}^\text{ref},0,0,\theta_k^\text{ref}]$}
    \lElse{ $\bs{p}_k\gets \bs{p}_{i-1|k+1}^\star$}
    \nl $\{\bs{A}\} \gets \bs{A}_k \cup \{\bs{A}\}$, $\{\bs{B}\} \gets \bs{B}_k \cup \{\bs{B}\}$
}
    \nl $\bs{L}_N \gets$ Eq.~\eqref{eq:Ln}, $\bs{H}_N \gets$ Eq.~\eqref{eq:Hn}\;
    \nl $\bs{X} \gets$ Eq.~\eqref{DRCCLPVMPC_formulation_experiments}\;
    \nl $\hat{\bs{\xi}}_i \gets$ Eq.~\eqref{eq:getxidata}, $\bs{\Xi} = \hat{\bs{\xi}}_i \cup \bs{\Xi}[1:]$\;
\end{algorithm}
\vspace{-3mm}
\begin{figure}[h!]
	\begin{center}
		\includegraphics[width=3in]{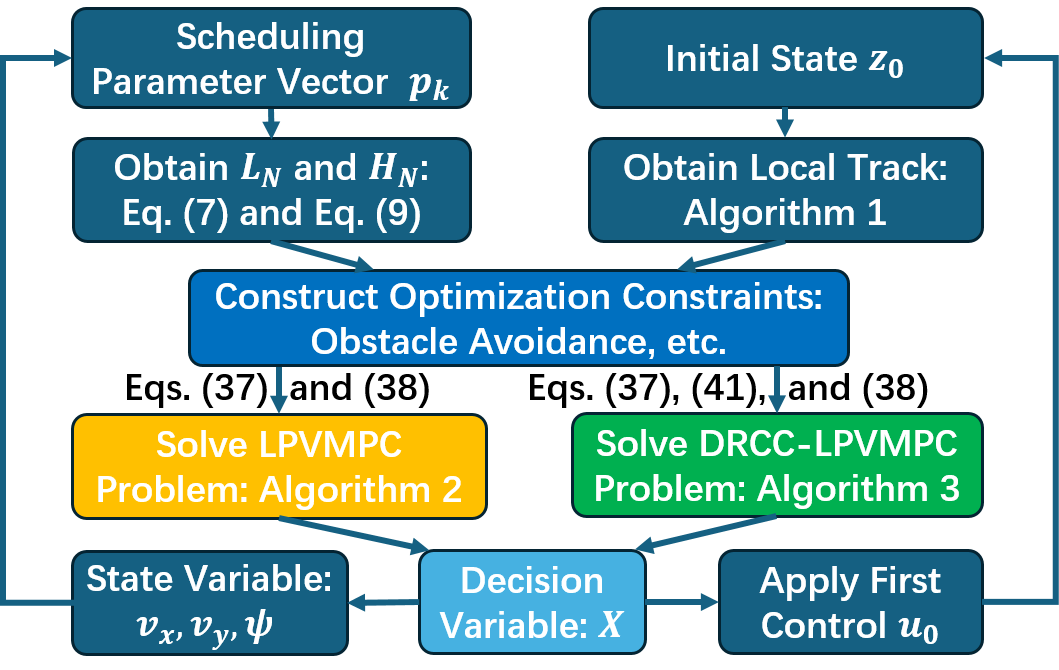}
           \vspace{-3mm}
		\caption{Overview of obtaining the control variable $\bs{X}$ from LPVMPC and DRCC-LPVMPC.}\label{fig:Complement Fig1}
	\end{center}
\end{figure}

In particular, both LPVMPC and DRCC-LPVMPC are used to compute the decision variable $\bs{X}$, which combines the initial state $z_0$ and the control inputs $u_k, k \in [N]$. The predicted states $[v_{x_k},v_{y_k},\psi_k]$ are then obtained using Eq.~\eqref{eq:lpv_rewrite}. Together with the control inputs $\delta_k$, these form the scheduling parameter vector $p_k = [v_{x_k},v_{y_k},\delta_k,\psi_k]$, which is passed to the next time step. Fig.~\ref{fig:Complement Fig1} summarizes the overall steps of LPVMPC and DRCC-LPVMPC.

\subsection{NMPC Setup}\label{section:nmpc setup}
Beyond LPVMPC and DRCC-LPVMPC, we also conducted NMPC experiments, following the formulation in~\cite{jain2020bayesrace}. Since~\cite{jain2020bayesrace} only addresses trajectory tracking without collision avoidance, we extend it by incorporating the same first-order obstacle avoidance constraints as in Eq.~\eqref{eq:LPVMPC_formulation}. 
The NMPCs are solved using the Interior Point Optimizer. The full formulation of the NMPC tracking problem is detailed in~\cite{jain2020bayesrace} and is omitted here for brevity.

\section{Numerical Results and Discussions}
\label{sec:result}

\subsection{LPVMPC}
We first present the experimental results of running the LPVMPC as described in Section~\ref{section:LPVMPC}. Fig.~\ref{fig:track_lpvmpc} shows the tracking results using LPVMPC, under the assumption that no uncertainties are considered. The distance to the obstacles is shown in Table~\ref{tab:Dis_compare_rho}. While performing obstacle avoidance, the maximum distance to the obstacles is only 0.00374 m. This means that if the disturbances exceed this threshold, the controller will direct the vehicle to an infeasible region of the problem. To simulate accumulated uncertainties, random disturbances from Table~\ref{tab:AppliedDisturbance} are added to the current state. We conducted 40 trials, and in all cases the vehicle collided with the first obstacle once additive noise pushed it into the safety region after approaching too closely. In rare instances, the vehicle managed to bypass the first two obstacles with very narrow margins, but even in these best-case scenarios it ultimately entered the safety region of the third obstacle and collided. One such case is illustrated in Fig.~\ref{fig:track_lpvmpc_uncertain}, where the vehicle successfully skirts the first two obstacles but is eventually driven inside the third obstacle’s safety region due to random disturbances.

\begin{figure}[!htb]
	\begin{center}
		\includegraphics[width=3in]{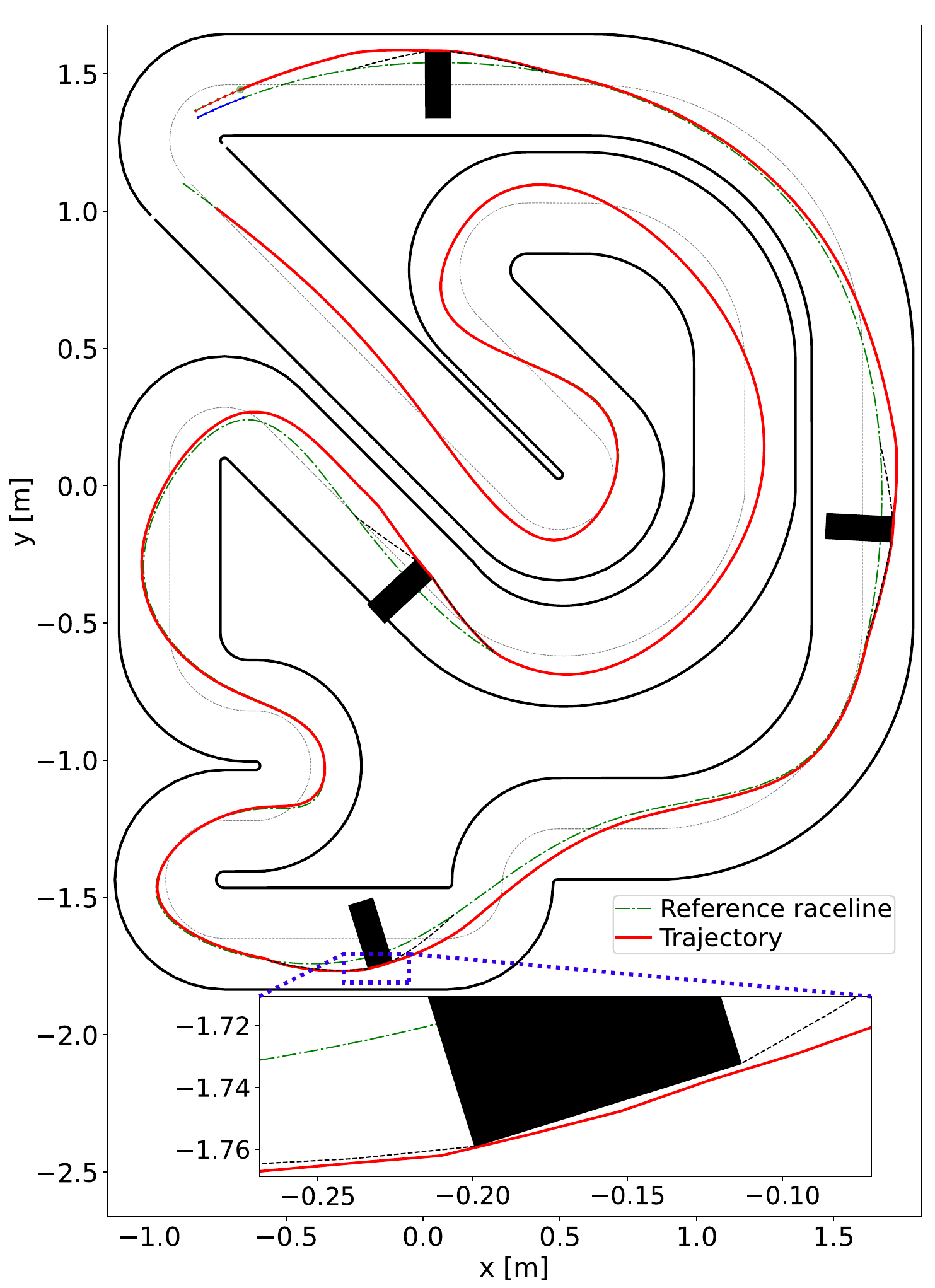}
        \vspace{-2mm}
		\caption{Tracking results using LPVMPC without uncertainties, showing the close distance between the path and obstacles. }\label{fig:track_lpvmpc}
	\end{center}
\end{figure}

\begin{table}[!thb]
	\centering
	\caption{NMPC, LPVMPC, and DRCC-LPVMPC Distance Statistics for Obstacles ($\epsilon = 0.10$, $\sigma = 0.08$)}
	\vspace{-2mm}
    \renewcommand{\arraystretch}{0.95}
	\begin{tabular}{lccccc}
		\toprule
		\textbf{Methods} & \textbf{$\rho_k$} & \textbf{Obs.} & \textbf{Mean Dist.} & \textbf{Max Dist.} & \textbf{Min Dist.} \\
		\midrule
            \multirow{4}{*}{NMPC} & - & 0 & 0.00132 & 0.00451 & 0.00039 \\
		& - & 1 & 0.00267 & 0.00289 & 0.00198 \\
		& - & 2 & 0.00148 & 0.00486 & 0.00253 \\
		& - & 3 & 0.00301 & 0.00415 & 0.00259 \\
		\midrule
		\multirow{4}{*}{LPVMPC} & - & 0 & 0.00197 & 0.00351 & 0.00056 \\
		& - & 1 & 0.00220 & 0.00258 & 0.00173 \\
		& - & 2 & 0.00327 & 0.00374 & 0.00262 \\
		& - & 3 & 0.00270 & 0.00318 & 0.00202 \\
		\midrule
		\multirow{19}{*}{DRCC-} & \multirow{4}{*}{-0.12} & 0 & 0.01911 & 0.02024 & 0.01738 \\
		& & 1 & 0.01488 & 0.01587 & 0.01371 \\
		& & 2 & 0.02339 & 0.02438 & 0.02222 \\
		& & 3 & 0.02094 & 0.02364 & 0.01880 \\
		\cmidrule{2-6}
		\multirow{12}{*}{LPVMPC} & \multirow{4}{*}{-0.14} & 0 & 0.02319 & 0.02449 & 0.02130 \\
		& & 1 & 0.01838 & 0.01947 & 0.01714 \\
		& & 2 & 0.02672 & 0.02783 & 0.02544 \\
		& & 3 & 0.02377 & 0.02546 & 0.02174 \\
		\cmidrule{2-6}
		& \multirow{4}{*}{-0.16} & 0 & 0.02718 & 0.02875 & 0.02503 \\
		& & 1 & 0.02120 & 0.02221 & 0.01997 \\
		& & 2 & 0.02964 & 0.03088 & 0.02822 \\
		& & 3 & 0.02633 & 0.02828 & 0.02402 \\
		\cmidrule{2-6}
		& \multirow{4}{*}{-0.18} & 0 & 0.03084 & 0.03249 & 0.02859 \\
		& & 1 & 0.02429 & 0.02544 & 0.02293 \\
		& & 2 & 0.03321 & 0.03457 & 0.03169 \\
		& & 3 & 0.03264 & 0.03401 & 0.03113 \\
		\cmidrule{2-6}
		& \multirow{4}{*}{-0.20} & 0 & 0.03506 & 0.03700 & 0.03261 \\
		& & 1 & 0.02730 & 0.02879 & 0.02561 \\
		& & 2 & 0.03010 & 0.03183 & 0.02822 \\
		& & 3 & 0.03533 & 0.03683 & 0.03368 \\
		\bottomrule
	\end{tabular}
	\label{tab:Dis_compare_rho}
\end{table}

\begin{table}[htbp]
	\caption{Ranges of additive disturbances}
	\vspace{-2mm}
        \hspace{13mm}
	\label{tab:AppliedDisturbance}
	\renewcommand{\arraystretch}{1.1} 
	\begin{tabularx}{0.8\columnwidth}{|c |c|c |}
		\hline
		\textbf{State Variable} & \textbf{Min} & \textbf{Max}  \\ \hline
		$x$ (m)          & $-0.5\times10^{-2}$          & $0.5\times10^{-2}$            \\ \hline
		$y$ (m)        & $-0.5\times10^{-2}$         & $0.5\times10^{-2}$        \\ \hline
		$\varphi$ (rad)         & $-0.5\times10^{-4}$          & $0.5\times10^{-4}$        \\ \hline
		$v_x$ (m/s)          & $-0.1\times10^{-4}$        & $0.1\times10^{-4}$          \\ \hline
		$v_y$ (m/s)         & $-0.1\times10^{-5}$         & $0.1\times10^{-4}$        \\ \hline
		$\omega$ (rad/s)      & $-0.1\times10^{-4}$         & $0.1\times10^{-4}$         \\ \hline
	\end{tabularx}
\end{table}

\begin{figure}[ht]
	\begin{center}
		\includegraphics[width=3.1in]{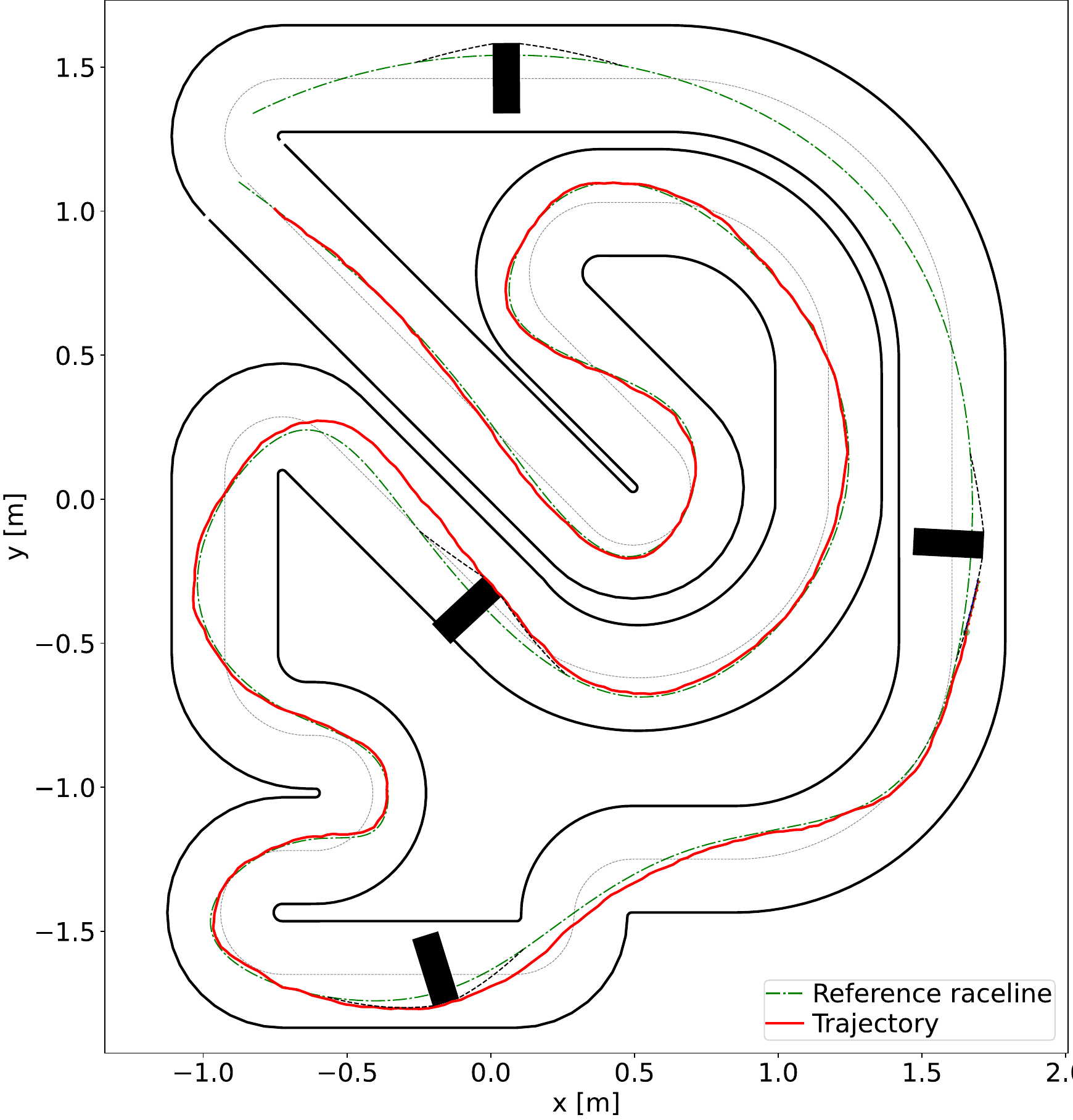}
           \vspace{-3mm}
		\caption{The best-case scenario tracking results using LPVMPC with random disturbances added to the state at each iteration, where the solver fails at the third obstacle due to an infeasible problem.}\label{fig:track_lpvmpc_uncertain}
	\end{center}
        \vspace{-5mm}
\end{figure}
\begin{figure}[t]
	\begin{center}
		\includegraphics[width=3.in]{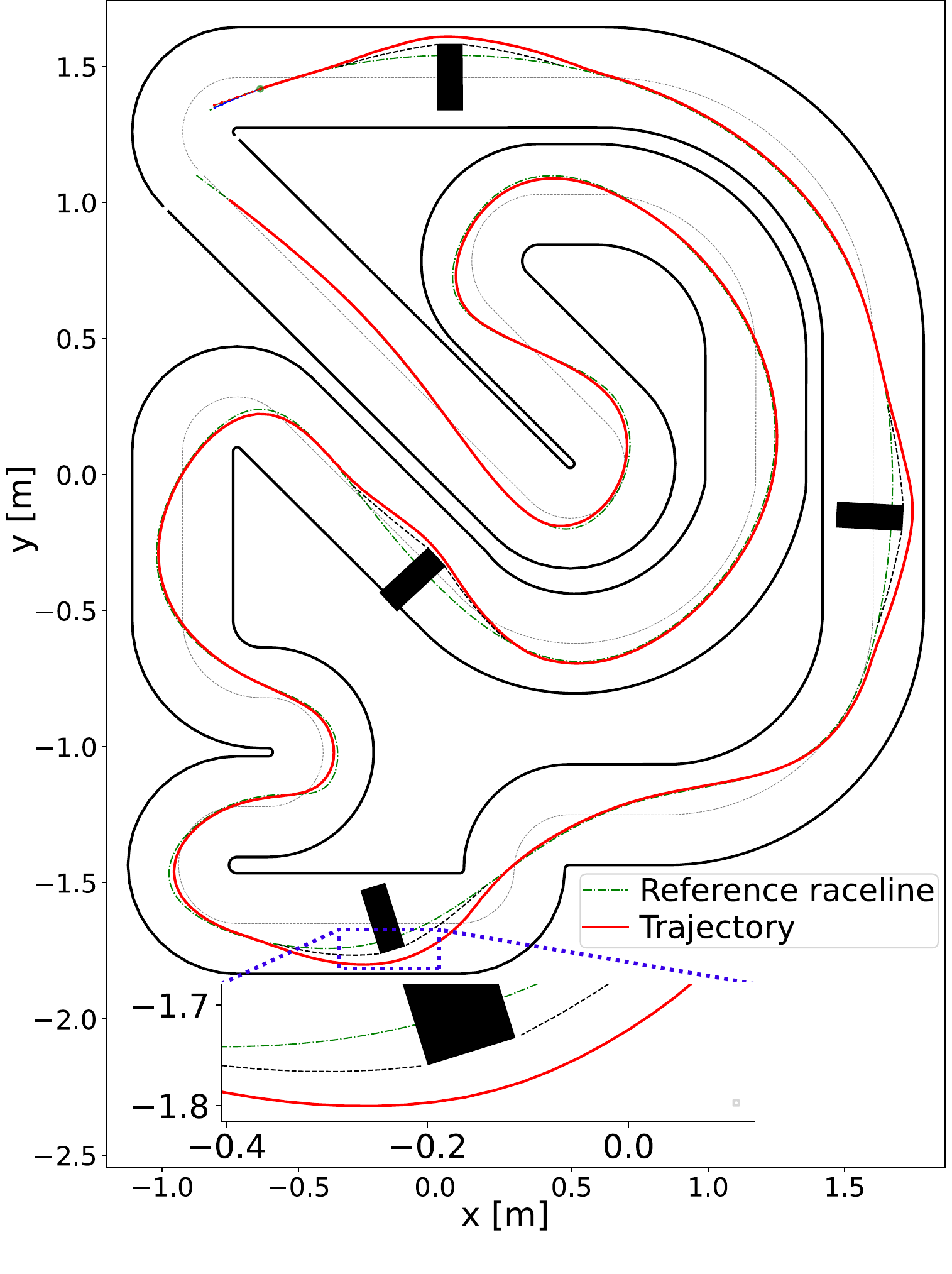}
           \vspace{-3mm}
		\caption{Tracking results using DRCC-LPVMPC without uncertainties under $\rho_k = -0.18$. The inset figure shows the zoomed-in tracking trajectory versus the reference, where the distance between the path and obstacles is larger.}\label{fig:track_drcclpvmpc}
	\end{center}
    \vspace{-5mm}
\end{figure}
\begin{figure}[ht!]
	\begin{center}
		\includegraphics[width=3.1in]{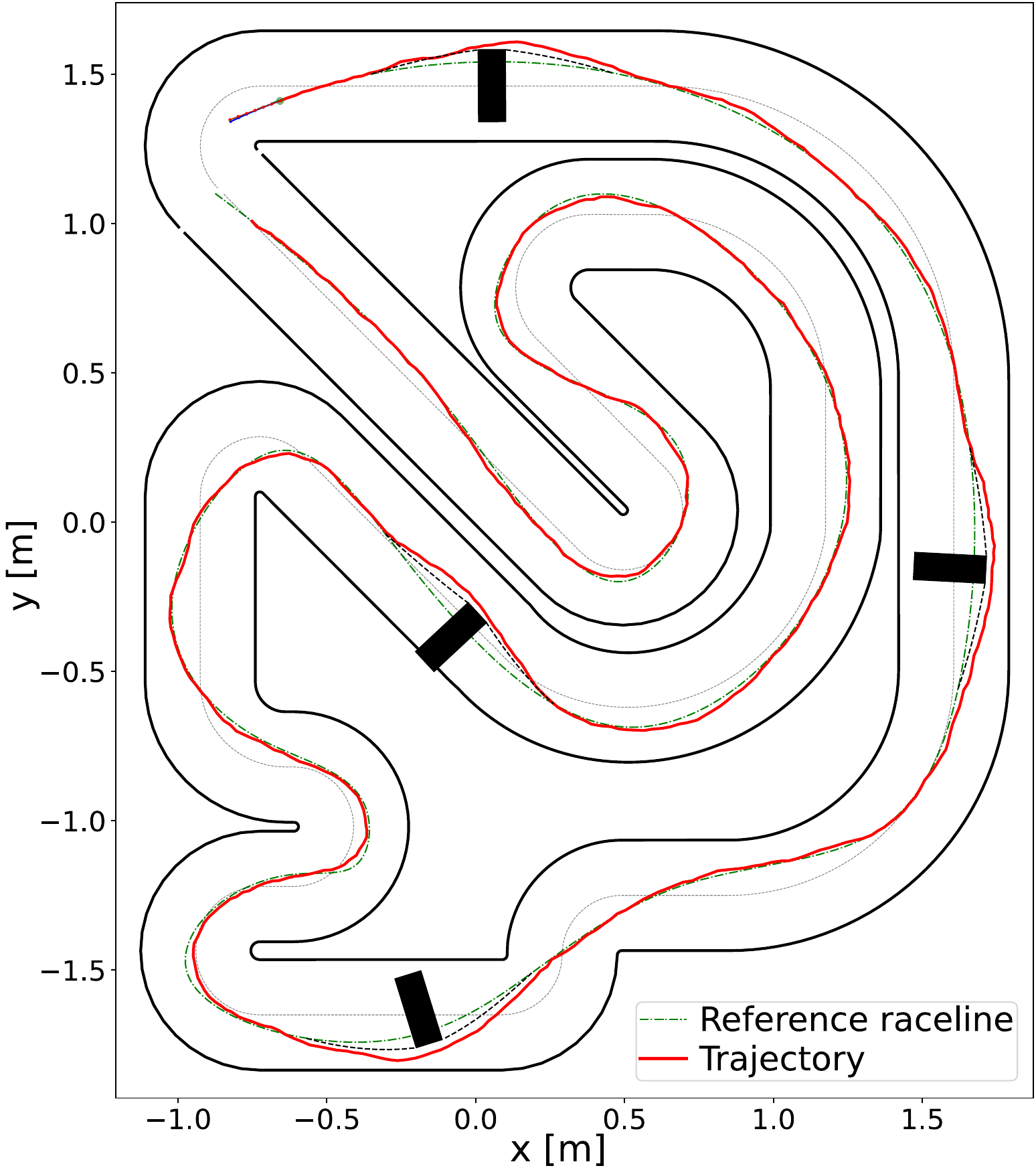}
           \vspace{-3mm}
		\caption{ Tracking results using DRCC-LPVMPC with $\rho_k = -0.20$, where random disturbances are added to the state at each iteration, illustrating a safer driving trajectory under uncertainties.}\label{fig:track_drcclpvmpc_uncertain}
	\end{center}
    \vspace{-7mm}
\end{figure}

We note that different works represent obstacles in various ways (e.g., circles, tubes, or grid-based methods), and by modifying the cost or constraint formulations, linear or nonlinear MPC schemes can achieve successful avoidance. In this study, however, we intentionally adopt a consistent obstacle representation and avoidance scheme for both LPVMPC and our proposed DRCC-LPVMPC, so that the benefits of distributional robustness can be highlighted through a fair comparison.

\subsection{DRCC-LPVMPC}
A common way to address collision is to add an ``inflation radius" to obstacles, effectively ``expanding" them with a safety gap, similar to the concepts used in simultaneous localization and mapping. The original purpose of adding the inflation radius is to treat the vehicle as a point by adding the maximum radius of the vehicle body to the obstacle boundary. By further increasing the inflation radius, the vehicle may hit the inflated boundary, but it is still considered safe relative to the actual obstacle boundary. However, increasing the inflation radius reduces the feasible region for the controller and does not change the fact that the constraints remain hard constraints. If uncertainty causes the vehicle to enter an infeasible region, such as the safety region shown in Fig.~\ref{fig:feasible_region}, the controller will fail to find a solution, regardless of the inflation radius. By converting the hard constraints into chance constraints in the DRCC-LPVMPC, we not only expand the feasible region (increasing robustness) but also enable the vehicle to bypass obstacles at a larger distance (improving safety). This is another key contribution of this work.

Numerical experiments for DRCC-LPVMPC were conducted as described in Section~\ref{section:drcclpvmpc setup}. Similar to LPVMPC, the first tests were run without random additive disturbances to the current state. Fig.~\ref{fig:track_drcclpvmpc} shows the tracking performance with $\rho_k = -0.18$, $\epsilon = 0.10$, and $\sigma = 0.08$. Compared to Fig.~\ref{fig:track_lpvmpc}, it is clear that DRCC-LPVMPC allows the vehicle to avoid obstacles with larger clearance, yielding safer trajectories even under uncertainty. Fig.~\ref{fig:track_drcclpvmpc_uncertain} presents the tracking performance of DRCC-LPVMPC with $\rho_k = -0.20$, $\epsilon = 0.10$, and $\sigma = 0.08$, while introducing random additive disturbances to the current state. Unlike LPVMPC, DRCC-LPVMPC successfully avoids the obstacles and maintains a safe distance from them, even in the presence of disturbances. Readers can refer to the companion video clip for a comparison of tracking performance.

We also present the mean, maximum, and minimum distances between the vehicle and each obstacle during the avoidance process, as shown in Table~\ref{tab:Dis_compare_rho}. It is evident that as the absolute value of $\rho_k$ increases, the planner increases the distance between the vehicle and the obstacles to enhance robustness. Additionally, we show the best lower bound value found by the DRCC-LPVMPC in Table~\ref{tab:LB_compare_epsilon}. As the risk tolerance $\epsilon$ decreases from 0.30 to 0.05, the planner adjusts to increase the lower bound of the chance constraints, further improving robustness.

\begin{table}[!htb]	
	\centering
	\caption{DRCC-LPVMPC lower bound (LB) statistics for obstacles ($\sigma = 0.08, \rho_k = -0.16$)}
	\vspace{-2mm}
    \renewcommand{\arraystretch}{0.93}
	\begin{tabular}{lccccc}
		\toprule
		\textbf{Method} & \textbf{$\epsilon$} & \textbf{Obs.} & \textbf{Mean LB} & \textbf{Max LB} & \textbf{Min LB} \\
		\midrule
		\multirow{19}{*}{DRCC-} & \multirow{4}{*}{0.05} & 0 & 500.01312 & 500.01390 & 500.01203 \\
		& & 1 & 500.01112 & 500.01162 & 500.01050 \\
		& & 2 & 475.09651 & 475.09722 & 475.09563 \\
		& & 3 & 487.55336 & 500.01436 & 475.09230 \\
		\cmidrule{2-6}
		\multirow{12}{*}{LPVMPC} & \multirow{4}{*}{0.10} & 0 & 466.73778 & 500.01391 & 450.09370 \\
		& & 1 & 500.01108 & 500.01159 & 500.01047 \\
		& & 2 & 466.73512 & 500.01408 & 450.09539 \\
		& & 3 & 475.05289 & 500.01353 & 450.09199 \\
		\cmidrule{2-6}
		& \multirow{4}{*}{0.15} & 0 & 450.06719 & 500.01384 & 425.09329 \\
		& & 1 & 475.04104 & 500.01121 & 425.10142 \\
		& & 2 & 450.07035 & 500.01424 & 425.09401 \\
		& & 3 & 462.55265 & 500.01353 & 425.09171 \\
		\cmidrule{2-6}
		& \multirow{4}{*}{0.20} & 0 & 433.40256 & 500.01384 & 400.09359 \\
		& & 1 & 433.40379 & 500.01128 & 400.09962 \\
		& & 2 & 466.70704 & 500.01430 & 400.09323 \\
		& & 3 & 450.05220 & 500.01439 & 400.09135 \\
		\cmidrule{2-6}
		& \multirow{4}{*}{0.30} & 0 & 350.09387 & 350.09774 & 350.09162 \\
		& & 1 & 450.03601 & 500.01056 & 350.08816 \\
		& & 2 & 350.09133 & 350.09156 & 350.09091 \\
		& & 3 & 350.08993 & 350.09035 & 350.08921 \\
		\bottomrule
	\end{tabular}
	\label{tab:LB_compare_epsilon}
\end{table}

\subsection{NMPC}
As shown in Table~\ref{tab:Dis_compare_rho}, NMPC with hard constraints fails to ensure safe and robust obstacle avoidance. Table~\ref{tab:solution time} further shows that NMPC’s computation time increases sharply with obstacle-avoidance constraints, leading to frequent solution failures. In contrast, LPVMPC and DRCC-LPVMPC achieve real-time feasibility, with the linearized model greatly reducing computational burden. While DRCC-LPVMPC requires slightly more computation than LPVMPC due to uncertainty handling, its runtime remains well within real-time limits. Overall, DRCC-LPVMPC provides a practical balance between robustness and efficiency.

\begin{table}[!t]
	\centering
	\caption{Average computation time over 30 different trials under scenarios without and with obstacle avoidance}
	\vspace{-2mm}
	\begin{tabularx}{\columnwidth}{|l|c|c|}
		\hline
		\textbf{Methods} & No obstacle avoidance &  With obstacle avoidance  \\
		\hline
		NMPC & 0.026 s & 0.37 s\\
		\hline
		LPVMPC & 0.0032 s& 0.0047 s\\
		\hline
		DRCC-LPVMPC & 0.0032 s& 0.0098 s\\
		\hline
    \end{tabularx}
	\label{tab:solution time}
\end{table}

\begin{figure}[t!]
    \centering
\includegraphics[width=\linewidth]{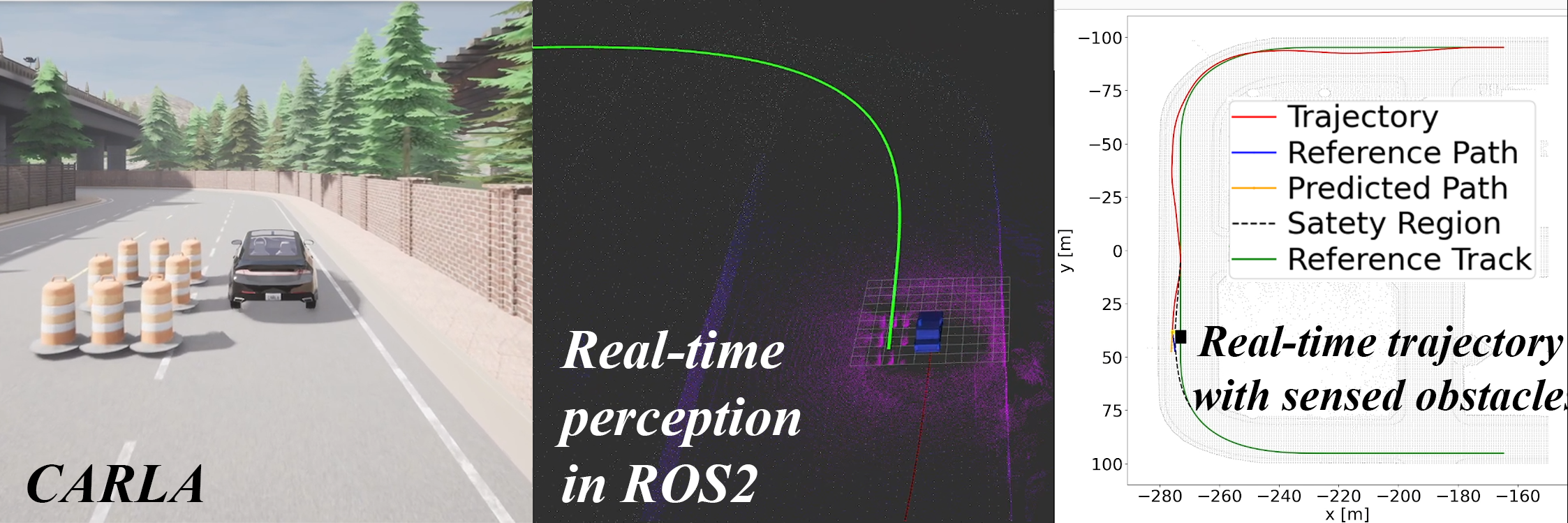}
   \vspace{-6mm}
    \caption{High-fidelity CARLA simulation of obstacle avoidance using DRCC-LPVMPC. A full video demonstration and package are
available on GitHub.}
    \label{fig:experimentCarla}
\end{figure}

\begin{figure*}[t!]
    \hspace{-2.5mm}
\subfigure[]{\includegraphics[height = 1.8 in]{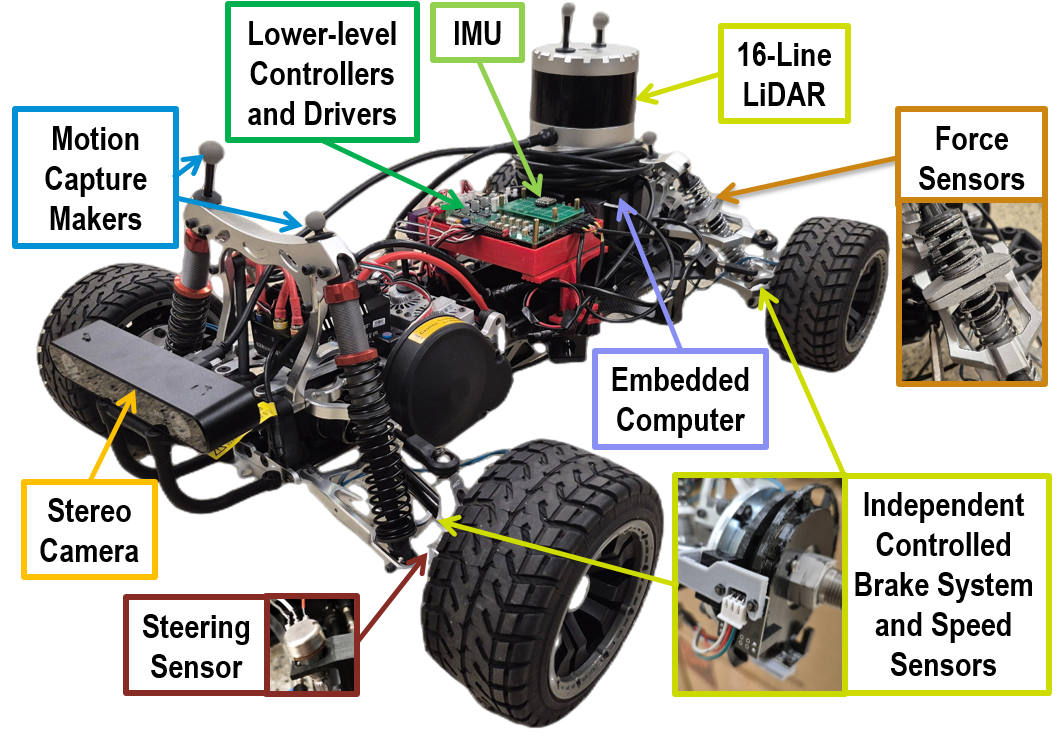}
	\label{fig:experimentVehicle}}
    \hspace{-3.5mm}
\subfigure[]{
\includegraphics[height = 1.8 in]{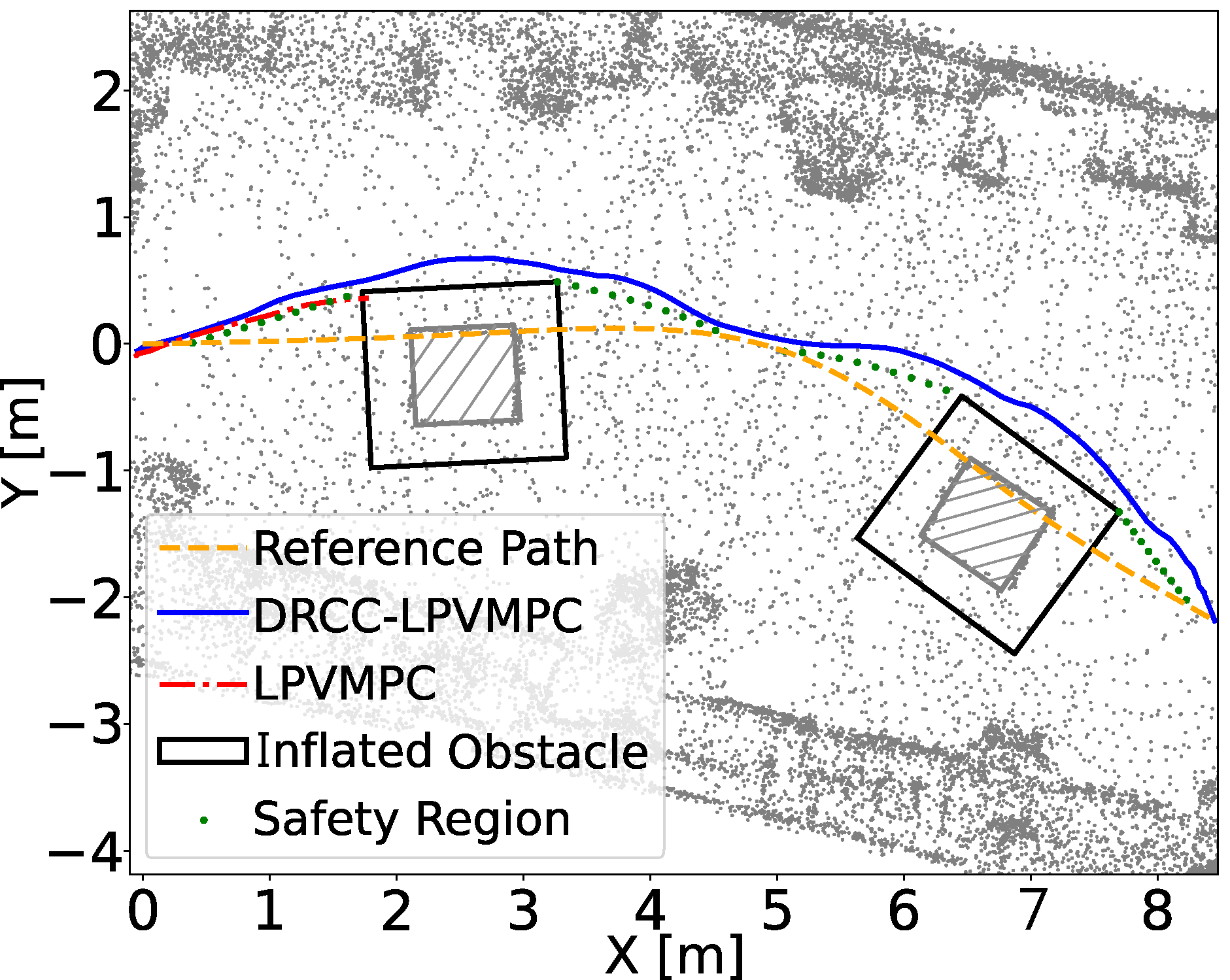}
	    \label{fig:realExpTracking}}
    \hspace{-3mm}
\subfigure[]{
\includegraphics[height = 1.8 in]{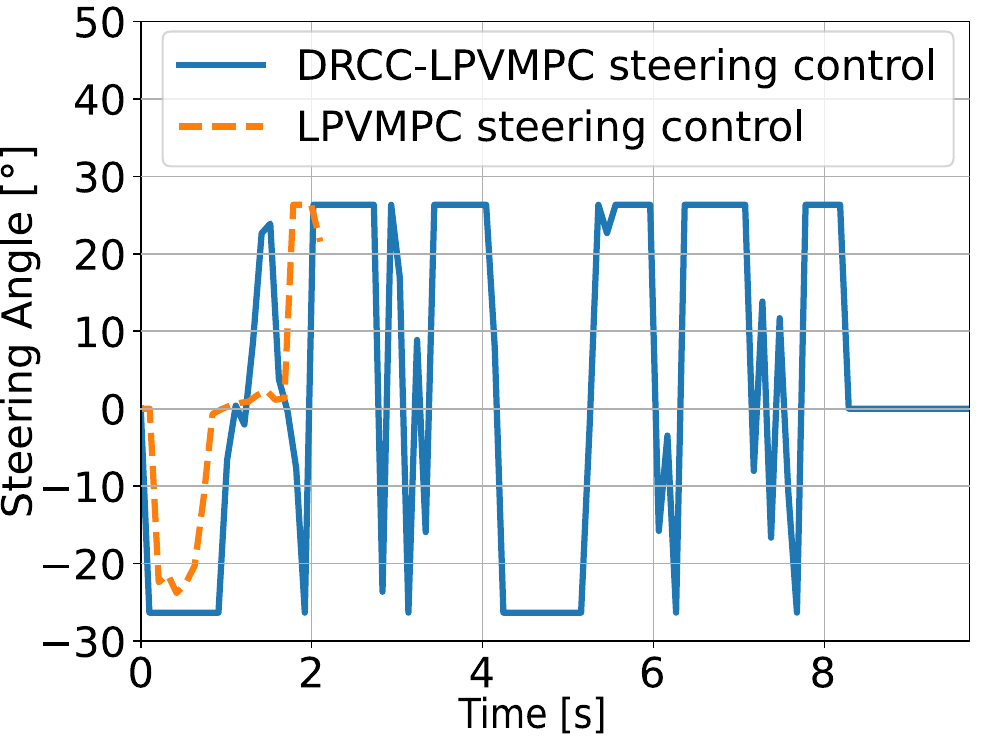}
	 \label{fig:realExpSteering}}
\vspace{-5mm}
  \caption{(a) 1:5 scaled experimental vehicle platform. $(x, y, \psi)$ is estimated using LiDAR-based 3D SLAM, while $v_x, v_y,$ and $\omega$ are obtained from odometry and an onboard IMU. An NVIDIA Jetson AGX Orin running ROS serves as the embedded computer, providing real-time control through steering and duty cycle/velocity commands. (b) Tracking performance of the scaled vehicle under LPVMPC (red) and DRCC-LPVMPC (blue). The obstacle is represented by the centered gray box with its inflated region shown in black. The yellow line denotes the reference trajectory, and the green line indicates the upper bound of the safety region. DRCC-LPVMPC maintains a larger clearance, ensuring safe avoidance, while LPVMPC collides with the obstacle. (c) Steering control inputs generated by LPVMPC and DRCC-LPVMPC during the real-world experiment. See the companion video for the real-world tracking performance.}
\vspace{-3mm}
\end{figure*}

\subsection{CARLA Simulation}
In addition to the numerical experiments, we conducted high-fidelity simulation studies in CARLA~\cite{Dosovitskiy17}, a widely used platform for sim-to-real autonomous driving research. The vehicle’s tire model was obtained using the method in~\cite{fang2025fine,FangMECC2024}. Fig.~\ref{fig:experimentCarla} shows the CARLA simulation of obstacle avoidance using DRCC-LPVMPC. Unknown obstacles are detected online through sensors and incorporated into the controller in real time. The vehicle model in CARLA uses an unknown nonlinear model, while the controller relies on a bicycle model for estimation. DRCC-LPVMPC successfully avoids three obstacles with safe margins, demonstrating robustness under model mismatch. A full video demonstration and ROS2 package are available on \href{https://github.com/Binghamton-ACSR-Lab/DRCCLPVMPC}{GitHub}.

For clarity, we report in the paper only the results with static obstacles, while additional experiments with dynamically sensed obstacles are provided in the GitHub repository. These results show that moving obstacles impose greater challenges on motion control and solution feasibility, as the vehicle must continuously adapt to dynamic constraints. Nevertheless, our proposed DRCC-LPVMPC controller consistently demonstrated robustness, safely avoiding all obstacles with sufficient clearance. In contrast, LPVMPC frequently led the vehicle too close to the safety region, resulting in collisions.

\section{Real World Experiment}\label{sec:RealExp}

To further validate our approach, we performed real-world experiments using a 1:5 scaled vehicle, shown in Fig.~\ref{fig:experimentVehicle}. The test scenario required the vehicle to bypass box-shaped obstacles. Both LPVMPC and DRCC-LPVMPC were implemented with identical controller configurations (tracking weights, vehicle dynamics, and maximum vehicle speed of $1.5\,\text{m/s}$). The tracking results in Fig.~\ref{fig:realExpTracking} show that LPVMPC (red trajectory) collided with the obstacle during the turning maneuver, whereas DRCC-LPVMPC successfully avoided the obstacle and reached the target. 

Moreover, DRCC-LPVMPC maintained a larger safety margin, as reflected in Fig.~\ref{fig:realExpSteering}, where the steering behavior leads to trajectories that remain farther from obstacles while still preserving accurate path tracking. This robustness arises from the data-driven disturbance samples $\hat{w}_j^k$, computed online as discrepancies between quasi-LPV predictions and sensor measurements. Although not detailed in this paper, these samples directly shaped the ambiguity set used in the DRCC formulation.

Finally, Fig.~\ref{fig:realExpTracking} highlights that, while LPVMPC still produced mathematically feasible trajectories, the real vehicle collided with the obstacle due to localization errors and actuation disturbances. This outcome is consistent with our simulation findings, underscoring that controllers which drive too close to obstacles are inherently fragile under real-world uncertainty, whereas DRCC-LPVMPC preserves safety margins and robustness.



\section{Conclusion}
\label{sec:concl}
This paper presented DRCC-LPVMPC, a data-driven distributionally robust control framework for autonomous driving with emphasis on safety-critical tasks such as obstacle avoidance. By consolidating discrepancies between STM dynamics, full vehicle models, and sensing disturbances into an additive disturbance term, the original MPC problem was reformulated as a DRCC problem within a quasi-LPV representation. The resulting convex approximation can be solved online via QP using real-time disturbance samples. Simulations demonstrated superior feasibility and safety margins compared to LPVMPC and NMPC, while scaled-vehicle experiments confirmed robustness and improved obstacle avoidance. Remaining challenges include accurate scheduling parameter estimation in highly dynamic scenarios, motivating future work on refined linearization and adaptive disturbance modeling to further enhance robustness and generalization.

\appendix

\subsection{Proof of Theorem \ref{theorem:1}}
\label{app1}
To reformulate, according to Eq.~\eqref{eq:DRCCPs4}, 
\begin{align}\label{eq:DRCCPs5}
    \min_{\lambda_k \geq0}\{&\lambda_k\sigma - \frac{1}{N}\sum_{j\in[J]} \inf_{\tilde{w}}[\lambda_k\|\tilde{w}-\hat{w}_j\|- \nonumber\\
    & \mathbb{I}(\bs{d}_k\bs{D}_k\bs{L}_N\bs{X}-c_k<\tilde{w}^k,\exists k\in[N])]\} \leq \epsilon \nonumber
\end{align}
is equivalent to
\begin{align*}
    &\min_{\lambda_k\geq0}\{\lambda_k\sigma-\frac{1}{N}\sum_{j\in[J]}\nonumber\min\{\min_{k\in[N]}\\
    &\lambda_k\max\{\bs{d}_k\bs{D}_k\bs{L}_N\bs{X}-c_k-\hat{w}_j^k,0\}-1,0\}\} \leq \epsilon,
\end{align*}
by replacing with Eq.~\eqref{eq:DRCCPfact}. Then, we obtain
\begin{equation*}
    \min_{\lambda_k\geq0}\{\lambda_k\sigma-\frac{1}{N}\sum_{j\in[J]}\lambda_k f(\bs{X},\hat{w}_j)-1,0\}\leq\epsilon,
\end{equation*}
which can be expressed as
\begin{equation*}
    \lambda_k\sigma-\epsilon\leq\frac{1}{N}\sum_{j\in[J]}\min\{\lambda_k f(\bs{X},\hat{w}_j)-1,0\}.
\end{equation*}
This completes the proof.
\subsection{Proof of Theorem \ref{theorem:2}}
\label{app2}
To prove the theorem, we define $\mathbb{X}_D^1$ as the set in Eq.~\eqref{eq:DRCCP10}, then we only need to prove $\mathbb{X}_D^1\subseteq \mathbb{X}_D$ and $\mathbb{X}_D\subseteq \mathbb{X}_D^1$.
For $\mathbb{X}_D^1\subseteq \mathbb{X}_D$: Given $\bs{X}\in \mathbb{X}_D^1$, there exists $(\gamma_k,q,\bs{X})$ which satisfies Eq.~\eqref{eq:DRCCP10}. For each $k\in [N]$, Eq.~\eqref{eq:DRCCP10_1} and Eq.~\eqref{eq:DRCCP10_2} imply that
    \begin{subequations}
        \begin{align}
            q_j^k&\leq \min\{\min_{k\in [N]}\max\{\bs{d}_k\bs{D}_k\bs{L}_N\bs{X}-c_k-\hat{w}_j^k\}-\gamma_k,0\} \nonumber\\
            &=\min\{f(\bs{X},\hat{w}_j)-\gamma_k,0\}.\nonumber
        \end{align}
    \end{subequations}
    therefore, Eq.~\eqref{eq:DRCCP10_1} gives
    \[
    \sigma-\epsilon\gamma_k\leq\frac{1}{N}\sum_{j\in [J]}q_j^k\leq\frac{1}{N}\sum_{j\in [J]}\min\{f(\bs{X},\hat{w}_j)-\gamma_k,0\},
    \]
    which is the same as Eq.~\eqref{eq:DRCCP9_1}, thus $\bs{X}\in \mathbb{X}_D$.
    
For $\mathbb{X}_D\subseteq \mathbb{X}_D^1$: Given $\bs{X} \in \mathbb{X}_D$, there exists $(\gamma_k,\bs{X})$ that satisfies Eq.~\eqref{eq:DRCCP9}. Then, we can define
    \small
    \[
    q_j^k = \min_{k\in [N]}\{\max\{\bs{d}_k\bs{D}_k\bs{L}_N\bs{X}-c_k-\hat{w}_j^k,0\}-\gamma_k,0\},\forall j\in[J].
    \]
    \normalsize
    Next, Eq.~\eqref{eq:DRCCP9_1} becomes
    \small
    \begin{subequations}
        \begin{align}
            &\sigma-\epsilon\gamma_k  \nonumber \\
            &\leq\frac{1}{N}\sum_{j\in[J]}\min\{\min_{k\in [N]}\max\{\bs{d}_k\bs{D}_k\bs{L}_N\bs{X}-c_k-\hat{w}_j^k,0\}-\gamma_k,0\}\nonumber\\
            &=\frac{1}{N}\sum_{j\in[J]}q_j^k.\nonumber
        \end{align}
    \end{subequations}
    \normalsize
    and
    \begin{align}
        &q_j^k+\gamma_k = \min_{k\in [N]}\{\max\{\bs{d}_k\bs{D}_k\bs{L}_N\bs{X}-c_k-\hat{w}_j^k,0\}\}\nonumber\\
        &\leq \max\{\bs{d}_k\bs{D}_k\bs{L}_N\bs{X}-c_k-\hat{w}_j^k,0\},\forall j\in [J], \forall k \in [N].\nonumber
    \end{align}
    In addition, the expression of $q_j^k$ implies $q_j^k\leq0$,$\gamma_k\geq0$, thus $\bs{X}\subseteq \mathbb{X}_D^1$.
This completes the proof.

\subsection{Proof of Corollary \ref{corollary1}}
\label{app3}
Eq.~\eqref{eq:DRCCP10_2} is equivalent to
\[
q_j^k+\gamma_k \leq \max\{\min_{k\in[N]}\bs{d}_k\bs{D}_k\bs{L}_N\bs{X}-c_k-\hat{w}_j^k,0\},\forall j\in[J].
\]
The outer maximum on the right-hand side can be linearized by introducing a binary variable $g_j^k$, a continuous variable $s_j^k$, and a sufficiently large constant $M_j$. If $g_j^k=0$, then Eq.~\eqref{eq:DRCCP11_3} is relaxed (inactive) and $q_j^k+\gamma_k\leq s_j^k =0$, which covers the case $\bs{d}_k\bs{D}_k\bs{L}_N\bs{X}-c_k-\hat{w}_j^k \leq 0$ in Eq.~\eqref{eq:DRCCP10_2}. If $g_j^k=1$, then Eq.~\eqref{eq:DRCCP11_4} is relaxed (inactive) and $q_j^k+\gamma_k\leq s_j^k \leq \bs{d}_k\bs{D}_k\bs{L}_N\bs{X}-c_k-\hat{w}_j^k$, which covers the case $\bs{d}_k\bs{D}_k\bs{L}_N\bs{X}-c_k-\hat{w}_j^k \geq 0$ in Eq.~\eqref{eq:DRCCP10_2}. This completes the proof.

\subsection{Proof of Theorem \ref{theorem:RecursiveFeasibility}}
\label{app4}
 
        To formally prove recursive feasibility, we adopt binary initialization, a common MPC strategy~\cite{li2021distributionally,Hewing2020RobustAdaptive,Giselsson2014Realtime}. This approach ensures that if the controller becomes infeasible at the current step, it can initialize using the predicted state from the previous step, thereby maintaining feasibility. Specifically, while the controller is feasible at the current time $t$, it uses the latest state $\bs{z}_t$. If the problem becomes infeasible, the controller uses the predicted state $\bs{z}^*_{t|t-1}$ obtained from the optimization at the previous time step $t-1$.
        
        To prove recursive feasibility, we need to show that a feasible solution exists for the DRCC-LPVMPC at time $t+1$ if $\bar{\bs{z}}_{t+1} = \bar{\bs{z}}_{t+1|t}$. Specifically, assuming the DRCC-LPVMPC is feasible at time $t$ with the corresponding optimal solution $\bs{X}^*_t = \{\bs{z}_t,\bs{u}_t^*,\bs{u}_{t+1}^*,\dots,\bs{u}_{t+N-1}^*\}$, we need to prove that the solution $\bs{X}_{t+1} = \{\bs{z}_{t+1},\bs{u}_{t+1}^*,\bs{u}_{t+2}^*,\dots,\bs{u}_{t+N-1}^*,\bs{0}_{1\times2}\}$ is feasible at time $t+1$.
			
		Adopting binary initialization ensures that $\bs{X}_{t+1}$ satisfies the system constraints Eq.~\eqref{eq:systemConst} from time $t+1$ to $t+N-1$. Now, we need to prove $\bs{X}_{t+1}$ is feasible for the system constraints at $t+N$ and the terminal constraints are satisfied as well. To proceed, we first rewrite the system constraints as
			\begin{align}
				&\bs{d}_{t+k}\bs{z}_{t+k} - c_{t+k} - \hat{w}_j^{t+k} \nonumber \\
				&= \bs{d}_{t+k}(\bs{E}_t^{t+k}\bs{z}_t + \sum_{j=1}^k (\bs{E}_j^k\bs{B}_{j-1}\bs{u}_{j-1}^\top + \bs{E}_j^k\bs{\xi}_{j-1}^\top) + \bs{B}_k \bs{u}_k^\top) \nonumber \\
				&- c_{t+k}- \hat{w}_j^{t+k}. \label{eq:rewriteSysCons1}
			\end{align}
			Applying Eq.~\eqref{eq:rewriteSysCons1} to Corollary 1 and denoting the system with expectation, we obtain 
			\begin{align*}
				&\bs{d}_{t+k}\bar{\bs{z}}_{t+k}  = \bs{d}_{t+k}(\bs{E}_t^{t+k}\bar{\bs{z}}_t + \sum_{j=1}^k \bs{E}_j^k\bs{B}_{j-1}\bs{u}_{j-1}^\top + \bs{B}_k \bs{u}_k^\top) \nonumber \\
				&\leq M_j + c_{t+k}.
			\end{align*}
			Since $\bar{\bs{z}}_{t+N}$ satisfies the terminal constraints in Eq.~\eqref{Eq:systemBound}, at $t+N$, we have
			\begin{align*}
				&\bs{d}_{t+N}\bar{\bs{z}}_{t+N} \nonumber \\
				& = \bs{d}_{t+N}(\bs{E}_t^{t+N}\bar{\bs{z}}_t + \sum_{j=1}^{t+N} \bs{E}_j^{t+N}\bs{B}_{j-1}\bs{u}_{j-1}^\top + \bs{B}_{t+N} \bs{u}_{t+N}^\top) \nonumber \\
				&\leq M_j + c_{t+N}. \label{eq:TerminalCons3}
			\end{align*}
			Thus, the system constraints are satisfied for $\bs{X}_{t+1}$ at $t+N$.
			
			Finally, since $\bs{u}_{t+N} = \bs{0}_{1\times2}$ and $\bar{\bs{z}}_{t+N}\in\mathcal{Z}_T$, from Eq.~\eqref{eq:terminalConstraints} we obtain
			\begin{align*}
				&\bar{\bs{z}}_{t+N+1} = \bs{A}_{t+N}\bar{\bs{z}}_{t+N} + \bs{B}_{t+N}\bs{u}_{t+N} = \bs{A}_{t+N}\bar{\bs{z}}_{t+N} \in \mathcal{Z}_T.
			\end{align*}
			Therefore, the terminal constraints are satisfied for $\bs{X}_{t+1}$, completing the proof of recursive feasibility.

\ifCLASSOPTIONcaptionsoff
  \newpage
\fi





\bibliographystyle{IEEEtran}
\bibliography{IEEEabrv,Bibliography}
\begin{biography}[{\includegraphics[width=1in,height=1.25in,clip,keepaspectratio]{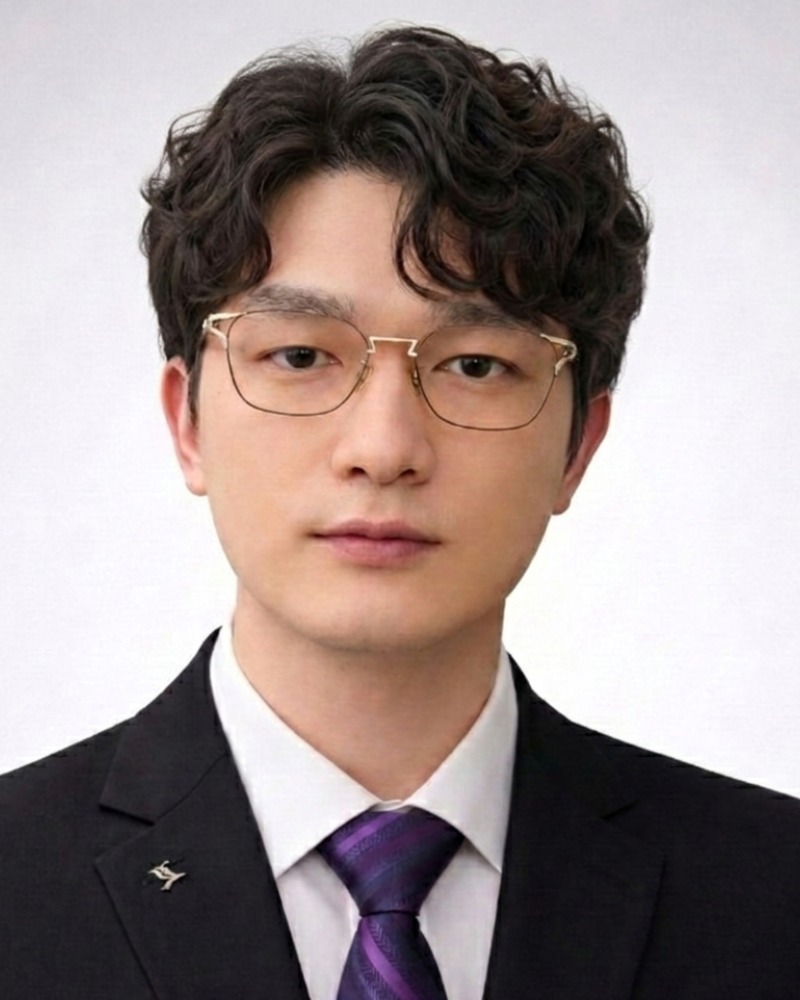}}]{Shiming Fang} received the B.S. degree in Mechanical Engineering from Wuhan University of Technology, Wuhan, China in 2016, and the M.S. degree in Mechanical Engineering from University of Birmingham, Birmingham, UK in 2017. He earned his Ph.D. in Mechanical Engineering at Binghamton University, Binghamton, NY, USA, in 2025. His research focuses on modeling, estimation, and deep reinforcement learning for robotics and autonomous robust control. He is currently a Postdoctoral Researcher at the State University of New York Polytechnic Institute, Utica, NY, USA, where he is developing deep reinforcement learning-based control for nanowires and energy-efficient spiking neural network models.
\end{biography}
\begin{biography}[{\includegraphics[width=1in,height=1.25in,clip,keepaspectratio]{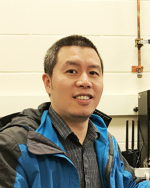}}]{Xilin Li} received the B.S. degree in process equipment and control engineering and the M.S. degree in chemical machinery from Zhejiang University, China, in 2007 and 2010, respectively. He earned his Ph.D. in Mechanical Engineering from Binghamton University, Binghamton, NY, USA, in 2025, specializing in advanced robotics with an emphasis on autonomous vehicles. His research primarily focuses on autonomous trajectory generation and optimization, as well as nonlinear model predictive trajectory control. He is currently a Postdoctoral Researcher at the Syracuse Center of Excellence at Syracuse University, Syracuse, NY, USA, where he is developing sensor networks and adaptive building technologies for environmental monitoring and energy efficiency.
\end{biography}
\begin{biography}[{\includegraphics[width=1in,height=1.25in,clip,keepaspectratio]{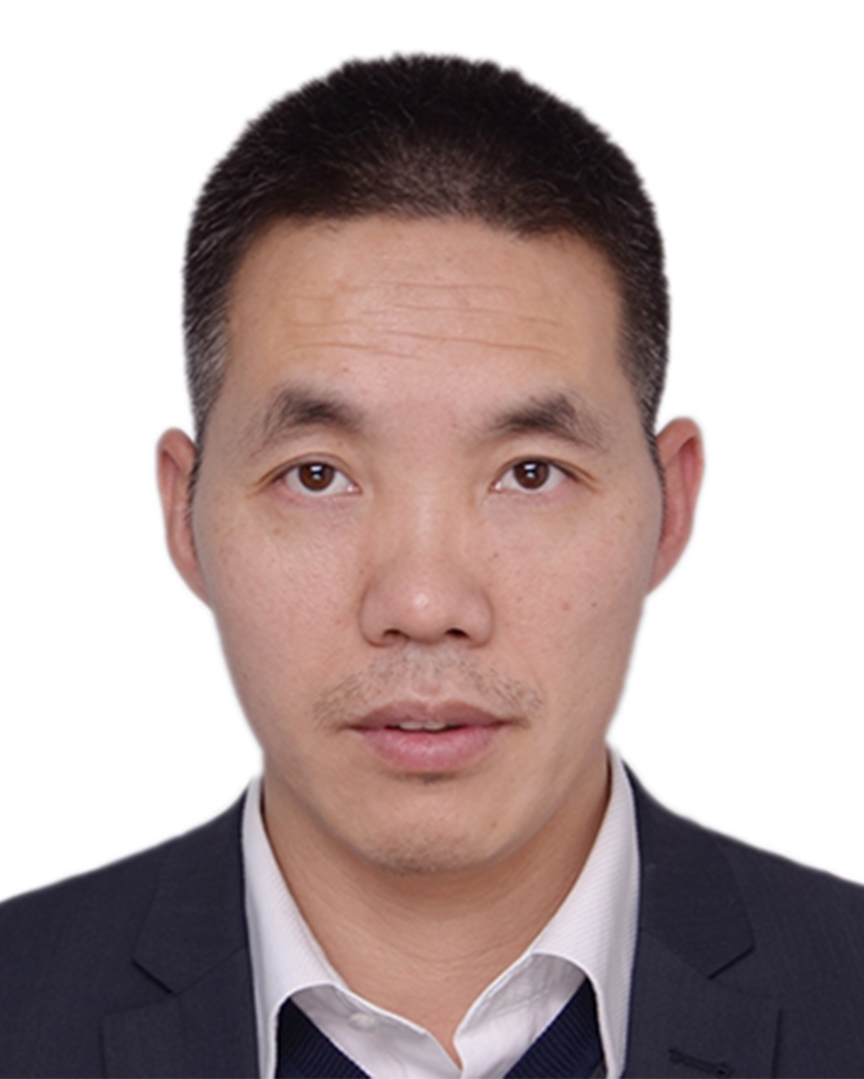}}]{Changzhi Wu} is a Chair Professor at the National Center for Applied Mathematics in Chongqing, China, and serves as a doctoral supervisor. He is currently the Managing Editor of the Journal of Industrial and Management Optimization. His research interests include big data modeling and analysis, optimization modeling and computational methods, and intelligent decision analysis.	
\end{biography}
\begin{biography}[{\includegraphics[width=1in,height=1.25in,clip,keepaspectratio]{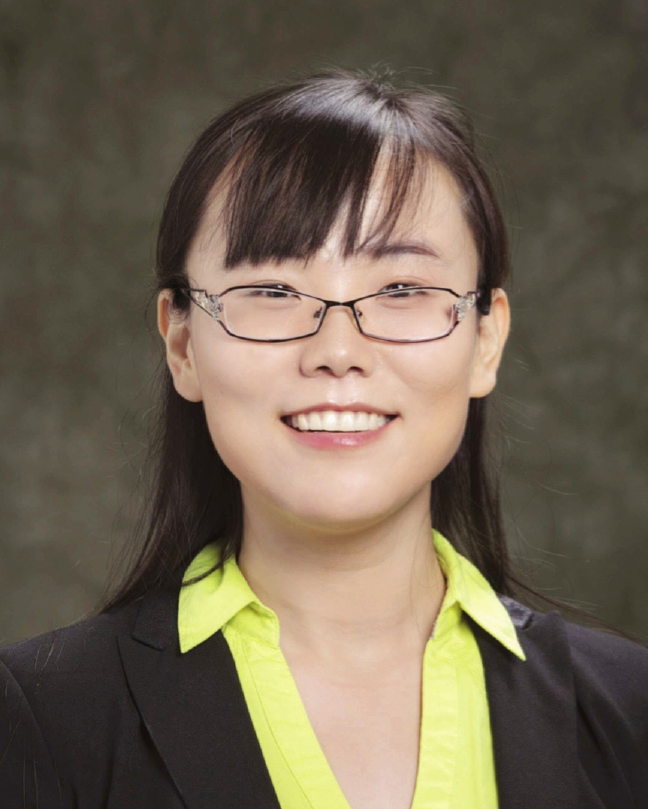}}]{Kaiyan Yu} (Member, IEEE) received the B.S. degree in intelligent science and technology from Nankai University, Tianjin, China, in 2010. She earned the Ph.D. degree in mechanical and aerospace engineering from Rutgers University, Piscataway, NJ, USA, in 2017. She joined the Department of Mechanical Engineering at Binghamton University, Binghamton, NY, USA, in 2018, where she is currently an Associate Professor. Her current research interests include autonomous robotic systems, motion planning and control, mechatronics, and automation science and engineering, with applications focused on Lab-on-a-chip technologies and nano/micro particle control and manipulation.
Dr. Yu is a member of the American Society of Mechanical Engineers (ASME). She is a recipient of the 2022 US NSF CAREER Award. She currently serves as an Associate Editor of the {\sc IEEE Transactions on Automation Science and Engineering}, {\em IEEE Robotics and Automation Letters}, {\em IFAC Mechatronics}, {\em Frontiers in Robotics and AI}, and the IEEE Robotics and Automation Society Conference Editorial Board and the ASME Dynamic Systems and Control Division Conference Editorial Board  (since 2018).		
\end{biography}


\vfill

 
\end{document}